\begin{document}
\newtheorem{theorem}{Theorem}
\newtheorem{acknowledgement}[theorem]{Acknowledgement}
\newtheorem{algorithm}[theorem]{Algorithm}
\newtheorem{axiom}[theorem]{Axiom}
\newtheorem{case}[theorem]{Case}
\newtheorem{claim}[theorem]{Claim}
\newtheorem{conclusion}[theorem]{Conclusion}
\newtheorem{condition}[theorem]{Condition}
\newtheorem{conjecture}[theorem]{Conjecture}
\newtheorem{corollary}[theorem]{Corollary}
\newtheorem{criterion}[theorem]{Criterion}
\newtheorem{definition}[theorem]{Definition}
\newtheorem{example}[theorem]{Example}
\newtheorem{exercise}[theorem]{Exercise}
\newtheorem{lemma}[theorem]{Lemma}
\newtheorem{notation}[theorem]{Notation}
\newtheorem{problem}[theorem]{Problem}
\newtheorem{proposition}[theorem]{Proposition}
\newtheorem{remark}[theorem]{Remark}
\newtheorem{solution}[theorem]{Solution}
\newtheorem{summary}[theorem]{Summary}
\newcommand{\comment}[1]{} 

\title{Non-hedgeable  risk and credit risk pricing}
\author{Juan Dong, Lyudmila Korobenko\and A. Deniz Sezer}
\thanks{Sezer is supported by a discovery grant from the Natural Sciences and Engineering Research Council of Canada}
\address{TransAlta, University of Pennsylvania, University of Calgary}
\email{adsezer@ucalgary.ca}
\subjclass[2000]{60G55 60G60}
\keywords{credit risk, Merton's model, illiquidity, mean variance hedging, stochastic control}
\date{\today}
\begin{abstract}

We introduce a new model for pricing of corporate bonds, which is a modification of the classical model of Merton.  In this new model, we drop the liquidity assumption of the firm's asset value process, and assume that there is an asset in the market which is correlated with the firm's asset value, and all portfolios can be constructed using solely this asset and the money market account.   We formulate the market price of the corporate bond as the product of the price of the optimal replicating portfolio and $\exp(- \kappa \times \mbox{replication error})$, where $\kappa$ is a positive constant.  The interpretation is that the representative investor accepts the price of the optimal replicating portfolio as a benchmark, however, requests compensation for the non-hedgeable  risk.  We show that if the replication error is measured relative to the firm's value, the resulting formula is arbitrage free with mild restrictions on the parameters.
\end{abstract}
\maketitle

\section{Introduction}

In the classical model of Merton \cite{Merton}, a corporate bond is a contingent claim on the assets of a firm.  A geometric Brownian motion $(V_t)_{t\geq 0}$ models the firm's assets,
\begin{eqnarray}
dV_t=\mu V_tdt+\sigma V_tdW_t \label{eq:value}
\end{eqnarray}
where $\mu\in \mathbb{R}$, $\sigma\in \mathbb{R}^{+}$, and $(W_t)_{t\geq 0}$ is a Brownian motion on some probability space $(\Omega,\mathcal{F},\mathbb{P})$.  The market is endowed with a money market account accumulating interest at a constant rate $r$.
A Merton style bond with face value $D$ and maturity $T$ is the pay-off $\min(V_T,D)$.  The model assumes that the firm's assets are liquidly traded in the market.  This assumption makes the pay-off hedgeable with the firms assets and the money market account. Hence, the arbitrage-free price, $B_t$, of the bond at time $t$ is given by a variation of the Black-Scholes formula:
\begin{eqnarray}
B_t= V_t N(d_1)-D \exp (-r(T-t))N(d_2)
\end{eqnarray}
where $N$ is the standard normal distribution function, $d_1=\frac{ln (V_t/D) +r(T-t)+\frac{1}{2} \sigma^2 (T-t)}{\sigma\sqrt{T-t}}$ and $d_2=d_1-\sigma\sqrt{T-t}$. 



Note that, in Merton's model, perfect replication is possible because of the unrealistic assumption that the assets of the company are liquidly traded in the market. In this paper, we follow Merton's model, but assume that the firm's assets are not liquidly traded.  Instead, there is an asset in the market, $S_t$, that is correlated with $V_t$, and all portfolios can be constructed using solely this asset and the money market account. As now the model is incomplete, there is no unique way to set the price of the claim $\min(V_T,D)$. 

Similar modifications of Merton's model have been considered in the literature by various authors (see e.g. \cite{Leung}, \cite{Bielecki}, \cite{Okhra}), who have used techniques such as indifference pricing, mean variance hedging, local risk minimization  for the valuation of the claim $\min(V_T,D)$.  We propose a different valuation approach that can be considered as an extension of mean variance hedging. We model the price process $(B_t)_{t\geq 0}$ of the claim $\min(V_T,D)$ as a stochastic process adapted to the natural filtration of $S_t$ and $V_t$. We write $B_t$ as the product of two components.  One component is the price of the optimal replicating portfolio for $\min(V_T,D)$ determined at time $t$; that is, the optimal replicating portfolio is solved for the time horizon $[t,T]$ at each instant $t$.  The other component is a discount factor that takes into account the replication error. More specifically, 
\begin{eqnarray}\label{pricingrule}
B_t& =& \mbox{The price of the time $t$-optimal replicating portfolio}\\
&&\times e^{-\kappa \times \mbox{the time $t$-minimum replication error}},\nonumber
\end{eqnarray}
where $\kappa$ is a positive constant.  

The interpretation is that the buyer (or the seller) of the bond at a given time $t$ uses the price of the optimal replicating portfolio as a benchmark, however, requests (or, respectively offers) compensation for the additional risk involved in the claim $\min(V_T,D)$ (more on this later).   The optimality criterion for the replicating portfolio is the expected square difference between the terminal value of the portfolio and the claim $\min(V_T,D)$.  This optimization problem is referred to in the literature as the mean variance  hedging (MVH) problem \cite{Schweizer}.  Note that we consider the dynamic version of the mean variance hedging problem; since at each time $t$, we re-solve the optimal replication problem for the period $[t,T]$ conditional on the information available up to time $t$.  Let $b_t$ be the cost of the optimal portfolio found at $t$.  $b_t$ has been proposed as an approximate price for non-hedgeable claims (see e.g. \cite{Schweizer2}).  This is justified because $b_t$, under certain assumptions on the underlying filtration, can be represented as a conditional expectation of the claim with respect to the variance optimal martingale measure equivalent to $\mathbb{P}$ \cite{Bobra}.  This ensures that $(b_t)_{t\in[0,T]}$ forms an arbitrage free price process for the corresponding claim.  An interesting question is whether $(B_t)_{t\in[0,T]}$ in formula \eqref{pricingrule} is arbitrage free as well.

Here we explain the motivation behind the discount factor.  For simplicity, let us assume that $t=0$.  First, we observe that the mean terminal wealth of the optimal replicating portfolio is the same as the mean of the claim $\min(V_T,D)$.  Second, the variance of the terminal wealth of the optimal replicating portfolio is strictly less than the variance of  the claim $\min(V_T,D)$. These properties simply follow from the fact that the optimal replicating portfolio is the projection of the claim $\min(V_T,D)$ on the linear space of random variables of the form $x+\int_0^T \phi_t dS_t$, where $x\in \mathbb{R}$ and $\phi_t$ adapted to the filtration generated by $V$ and $S$.  Let $\Phi_T=x^*+\int_0^T \phi_t dS_t$ be this projection. Then,
\begin{eqnarray*}
\mathbb{E}\left[(\min(V_T,D)- \Phi_T)^2\right]& =& \mbox{Var}\left[\min(V_T,D)\right]+ \mbox{Var}\left[\Phi_T\right]\\
&& -2 \mbox{Cov}\left[ \min(V_T,D), \Phi_T\right]+ \left[\mathbb{E}\min(V_T,D)- \mathbb{E}\Phi_T\right]^2
\end{eqnarray*}
This decomposition implies that $\mathbb{E} \left[\min(V_T,D)\right]= \mathbb{E}\left[\Phi_T\right]$ as the other terms in the right side of the above equation are invariant under shifting the variable $\Phi_T$ by a constant. Since $\min(V_T,D)-\Phi_T$ is orthogonal to $\Phi_T-\mathbb{E}(\min(V_T,D))=\Phi_T-\mathbb{E}(\Phi_T)$, we also have that
\begin{eqnarray}
\nonumber\mbox{Var}\left[\min(V_T,D)\right]& =& \mathbb{E}\left[ (\min(V_T,D)- \Phi_T)^2\right]+\mathbb{E}\left[(\min(V_T,D)- \Phi_T) (\Phi_T-\mathbb{E}(\min(V_T,D))\right]\\
&&+ \mbox{Var}\left[\Phi_T\right]\nonumber \\
&=&\mathbb{E}\left[(\min(V_T,D)- \Phi_T)^2\right]+\mbox{Var}\left[\Phi_T\right] \label{dec}.
\end{eqnarray}
Hence, $\mbox{Var}\left[\min(V_T,D)\right]> \mbox{Var}\left[\Phi_T\right]$. The existence of a discount factor for the price of the claim $\min(V_T,D)$ is consistent with the Markowitz portfolio choice theory: the price of the claim $\min(V_T,D)$ must be less than the price of the optimal replicating portfolio.  It is intuitive that the discount factor should be a function of the replication error, as the replication error is exactly the difference between the variance of the claim $\min(V_T,D)$ and the variance of the terminal wealth of the optimal replicating portfolio as the equation \eqref{dec} shows.  Replacing the expectations with conditional expectations, similar arguments hold for an arbitrary time $t$, as well.  The choice of the exponential function is for convenience: we need the discount factor to be non-negative, and less than $1$, and converge to $1$ as $t\rightarrow \infty$. Any other function satisfying these properties can be used as well, but we prefer the exponential function for mathematical convenience.

Dynamic MVH can be formulated as a stochastic control problem as in \cite{Bertsimas}, \cite{Bobra}, \cite{JeanBlanc}, \cite{Kohlmann2}.  Corresponding stochastic control problem can be solved using the Hamilton Jacobi Belman equation if the underlying processes are Ito processes (as in \cite{Bertsimas}), or by backward stochastic differential equations if the underlying processes are general semi-martingales (as in \cite{Bobra}, \cite{Kohlmann2}, \cite{JeanBlanc}).  Since our underlying processes are geometric Brownian motions, we rely on the results of \cite{Bertsimas} to obtain an analytical formula for the price of $\min(V_T,D)$.  Our main contribution is that under certain conditions on the parameters, the price process $(B_t)_{0\leq t\leq T}$ in formula \eqref{pricingrule} satisfies the NFLVR (no free lunch with vanishing risk) condition of \cite{DalSch}.

This paper in its origin is tied to the credit risk premium puzzle, which refers to the inability of the classical structural models, such as Merton's model, to predict credit spreads of corporate bonds of short maturity \cite{CCG}.  In particular, it is well known that Merton's model, when calibrated to the historical default rates of a firm, significantly underestimates the short term credit spreads of the corporate bonds of the same firm. It is natural to attribute this lack of fit to the unrealistic assumptions of the model:  (1)  The firm's asset value is perfectly observed; (2)  the firm's assets are liquidly traded.     The imperfect information models \cite{DL}, \cite{GJZ},\cite{JPS}) are variations of Merton's model relaxing the perfect information assumption, and thereby allowing the default time to have a conditional hazard rates as in reduced form models to give a better fit to short term credit spreads.   However, the incompleteness of the information makes the model incomplete. Hence, there is a need to model the market's preferences for risk and return as well, which is not taken into consideration in the information reduction papers so far.  Here, we acknowledge that the current paper does not provide a full story either; eventhough we incorporate preferences into the price.  We note that the assumptions (1) and (2) can be seen as the two sides of the same coin; it is hard to imagine that one would hold in the absence of the other. For simplicity, in this paper, we unrealistically assume that (1) is true, wheres (2) is false. Inevitably, our short term spreads resemble the short term spreads of a perfect information model.  Hence, our results should not be considered as an attempt to solve the credit risk premium puzzle, but as a preliminary analysis to a more powerful methodology which relaxes the assumption (1) as well.     

It is also worthwhile to compare our approach to indifference pricing.   Indifference price $p$ of a defaultable bond is the price defined for an initial endowment $v$ that makes the buyer indifferent between investing the $v$ dollars in the default-free  market versus buying the defaultable bond and investing the remaining $v-p$ dollars in the default-free market (see e.g. \cite{Bielecki}, \cite{Leung}). The comparison of the investment strategies are done via comparing the expected utility of the final wealth, with respect to a pre-specified utility function.  Since this formulation is done from the buyer's point of view, this is also called the buyer's indifference price.  A similar formulation can be made from the point of view of the seller, which yields the seller's indifference price, which is in general different from the buyer's price.  The main difference of our approach from indifference pricing is that the price in our model is not characterized as a break even point for an optimal investment problem either from an individual buyer's or seller's point of view. The formula \eqref{pricingrule} represents the market price of the claim $\min(V_T,D)$, and it is determined by a representative agent (representing the aggregate behavior of the investors in the economy). The interpretation is that the market settles on a price so that the risk-return profile of the product is compatible with the other instruments in the market.  The discount factor represents the market's risk aversion towards a product that is more risky.  The key question for us is whether this model is a legitimate model for the market price of the claim $\min(V_T,D)$, hence the question of arbitrage.  Whereas, in indifference pricing, the question of arbitrage is not the main question of interest since the indifference price represents an individual's own valuation of the claim $\min(V_T,D)$, regardless of its market price.   

Our pricing framework is amenable to extensions to more general credit risk derivatives and/or to multifactor models due to existence of a vast literature on the mean variance hedging problem and its solution in terms of backward stochastic differential equations.  In particular, both the price of the optimal replicating portfolio $(b_t)_{t\geq 0}$ and the replication error $(c_t)_{t\geq 0}$ can be represented as solutions of BSDEs. (see e.g \cite{Bobra})  These representations can be explored to answer the no arbitrage property in a more general framework.

Organization of the paper is as follows:  In section \ref{sec:scp}, we define our setting and review the MVH problem for a general contingent claim and the corresponding stochastic control problem.  In particular, we review \cite{Bertsimas}'s characterization of the solution of the stochastic control problem in terms of a certain boundary value problem for a system of partial differential equations(PDEs). In section \ref{sec:pricing}, we introduce our pricing model for a general contingent claim and apply the results of \cite{Bertsimas} to formulate the price in terms of the solutions of a certain system of PDEs.  In section \ref{sec:bondpricing}, we focus on the pricing of the contingent claim $\min(D,V_T)$ and obtain an analytical formula for the price by solving the corresponding system of PDEs. We then use these explicit formulas to analyze the qualitative properties of the price, in particular, whether it is arbitrage free.  In section \ref{sec:numerical} we examine and perform numerical experiments to see how the price and yield spreads react to changes in the parameters.  Appendix contains the derivation of the solutions to the PDEs.  

\section{MVH as a stochastic control problem} \label{sec:scp}

In this section, we define the MVH problem and give an overview of the stochastic control approach of \cite{Bertsimas} to solve the MVH.  Let $V_t$ be as defined in the formula \eqref{eq:value}.  From now on, we denote the drift of $V_t$ as $\mu_1$, the volatility of $V_t$ as $\sigma_1$ and the Brownian motion driving $V$ as $W^1$. Let $S_t$ be another geometric Brownian motion:
\begin{eqnarray}
dS_t=\mu_2 S_tdt+\sigma_2 S_tdW^2_t \label{eq:asset},
\end{eqnarray}
where $W_t^1$ and $W_t^2$ are two Brownian motions with correlation $d W_t^1dW_t^2 = \rho dt$.

We assume a trading horizon $[0,T]$, $T>0$, and that initially the market contains only a traded asset with the price process $(S_t)_{t\geq 0}$ and a money market account which accumulates interest at a constant deterministic rate $r>0$.  In order not to introduce new notation, we assume that both $V_t$ and $S_t$ are already discounted.  We consider including in this market contingent claims with discounted pay-offs at $T$ formulated as functions $F(S_T,V_T)$ of $S_T$ and $V_T$. 

The dynamic for the discounted wealth of a self-financing portfolio is
\begin{equation*}
  dP_t = \theta_tdS_t = \theta_t \mu_{2}S_tdt+\theta_t\sigma_{2} S_tdW_t^2.
\end{equation*}
where $\theta_t$ is a predictable $S$-integrable process with respect $\mathcal{F}_t=\sigma(V_s,S_s, s\geq 0)$.
Since there are more random factors than the traded assets in the market, the market is incomplete. Therefore, there does not exist a replicating portfolio that perfectly hedges every given contingent claim $F(S_T,V_T)$. MVH refers to finding an optimal trading strategy that best approximates the payoff $F(S_T, V_T)$; that is,
\begin{center}
  \emph{minimize $\mathbb{E}[(P_T-F(S_T, Y_T))^2]$  $\qquad$ over all pairs ($p, \theta$)}
\end{center}
where the pair ($p, \theta$) describes a dynamic trading strategy which starts at time $0$ with initial capital $p$ and holds $\theta_t$ shares of the traded asset at time $t$ and is self-financing, thus leading to a wealth of $P_t$ at time $t$. Here $P_t$ is the discounted wealth of the portfolio.  This problem can be treated as a stochastic optimal control problem if we rewrite it as
\begin{center}
  \emph{\mbox{minimize} $\mathbb{E}[(P_T-F(S_T, V_T))^2]$  $\qquad$ over all $\theta \in \Theta$}
\end{center}
\emph{with the dynamics}
\begin{eqnarray*}
&& dV_t = \mu_{1} V_tdt+\sigma_{1} V_tdW_t^1,\\
&& dS_t = \mu_{2}S_tdt+\sigma_{S} S_tdW_t^2,\\
&& dP_t = \theta_tdS_t = \theta_t\mu_{2}S_tdt+\theta_t\sigma_{2} S_tdW_t^2,\\
&& P_0 = p, S_0=s, V_0=v.
\end{eqnarray*}
where $\Theta$ is the set of all $\mathbb{R}$-valued predictable $S$-integrable processes such that $\int_0^T \theta_t dS_t$ is well-defined and square integrable.  In this context, the 3-dimensional process $(P_t, S_t, V_t)$ is the state process, the process $(\theta_t)_{t\in[0,	T]}$ is the control process, and $\Theta$ is the constraint set. 

Note that in the above formulation, the initial cost of the portfolio $p$ is taken as fixed. Once the above problem is solved for any $p$, then one can optimize over $p$. To make this clearer, we firstly solve the optimal trading strategy $\theta^*(p,s,v)$ for this given $p$. Then we define
\begin{equation*}
  \varepsilon(s,v) := \sqrt{\min_{p} \mathbb{E}_{p,s,v}\left[(P_T-F(S_T, V_T))^2\right]}.
\end{equation*}
Let $p^{\star}(s,v)$ be the optimal solution of the above problem.  The uniqueness of $p^{\star}(s,v)$ is well known; in fact, $\mathbb{E}_{p,s,v}\left[(P_T-F(S_T, V_T))^2\right]$ is a quadratic function of $p$ \cite{Bertsimas}.   The optimal replicating strategy $\theta^{\star}(s,v)$ is $\theta(p^{\star}(s,v),s,v)$ and the minimal replication error is $\varepsilon(s,v)$.  Note that above we did not require $\Theta$ to contain only admissible processes (i.e. $\theta_t$ such that $P_t\geq 0$ for all $t$).  However, it turns out that  the optimal replicating strategy $\theta^{\star}(s,v)$ is always admissible .  This follows from the fact that $S$ is continuous and therefore the value process of the optimal replicating portfolio can be represented as a conditional expectation of the final pay-off with respect to the variance optimal martingale measure (see e.g.\cite{Bobra}).

Using the Markov property of $(S_t,V_t)$, \cite{Bertsimas} represents the control process $\theta$ as $\theta_t = \theta(t, P_t, S_t, V_t)$ .   Let $\mathbb{E}_{t,p,s,v}(\cdot)$ be the conditional expectation operator $\mathbb{E}(\cdot|P_t=p, S_t=s, V_t=v)$. Let $\Theta_t=\{(\theta_s)_{s\in [t,T]}: \int_t^T \theta_sdS_s \mbox{ is well defined  and square integrable}\}$.  As usual in stochastic control, we consider the dynamic version:
\begin{center}
  $\mbox{minimize}$ $\mathbb{E}_{t,p,s,v}[(P_T-F(S_T, V_T))^2]$  $\qquad$ over all $\theta \in \Theta_t$.
\end{center}
Let $V(t,p,s,v)$ be the optimal value function of this control problem.  It is well known that $V(t,p,s,v)$ is characterized as the solution of the Hamilton Jacobi and Bellman equation:

\begin{eqnarray}\label{HJB} \lefteqn{\frac{\partial V}{\partial t}+\inf_{\{\theta_t\}}\{[\mu_{2} s \frac{\partial}{\partial s}+\mu_1 v\frac{\partial}{\partial v}+\theta_t \mu_2 s\frac{\partial}{\partial p}+\frac{1}{2}\sigma_{2}^2s^2\frac{\partial^2}{\partial s^2}+\frac{1}{2}\sigma_{1}^2v^2\frac{\partial^2}{\partial v^2}}\\
&&+\sigma_{2} \sigma_{1} sv\rho\frac{\partial}{\partial s\partial v}+\frac{1}{2}\theta_t^2\sigma_{2}^2s^2\frac{\partial^2}{\partial p^2}+\sigma_{2}^2s^2\theta_t\frac{\partial^2}{\partial s\partial p}+\sigma_{2}\sigma_{1}sv\theta_t\rho\frac{\partial^2}{\partial v\partial p}]V\}\nonumber\\
&&= 0 \nonumber,
\end{eqnarray}
with the boundary condition
\begin{eqnarray*}
  V(T, p, s, v) = [p-F(s, v)]^2.
\end{eqnarray*}
The optimal choice of $\theta_t$, denoted by $\theta^*(t,V_t,S_t,P_t)$  satisfies:
\begin{eqnarray*}
\theta^{\star}_t & =& \frac{-\mu_2 s\frac{\partial V}{\partial p}-\sigma_{2}^2 s^2 \frac{\partial^2 V}{\partial s \partial p}-\rho\sigma_{1}\sigma_{2t} vs\frac{\partial^2 V}{\partial v \partial p}}{\sigma_{2}^2 s^2 \frac{\partial^2 v}{\partial p^2}}.
\end{eqnarray*}
Substituting the expression for $\theta^{\star}_t$ into the PDE \eqref{HJB} gives us a PDE for the only unknown function $V$.

The hard work of dynamic programming is solving the highly nonlinear PDE for $V$.  For MVH, this process is facilitated by the observation that the value process of the MVH problem has a quadratic structure.  More specifically, \cite{Bertsimas} proves the following theorem:

\begin{theorem}\cite{Bertsimas} The value function $V(t, p, s, v)$ is quadratic in the wealth of the portfolio $p$, i.e. there are continuous functions $a(t, s, v)$, $b(t, s, v)$ and $c(t, s, v)$ such that
\begin{equation*}
  V(t, p, s, v) = a(t, s,v )\cdot[p-b(t, s, v)]^2 + c(t, s, v), \qquad 0 \leq t \leq T,
\end{equation*}
where $a$, $b$ and $c$ satisfy
\begin{eqnarray}
\frac{\partial a}{\partial t} &=& (\frac{\mu_2}{\sigma_2})^2a+\mu_{2} s \frac{\partial a}{\partial s}+[\frac{2\sigma_{1}\rho v\mu_2}{\sigma_{2}}-\mu_{1}v]\frac{\partial a}{\partial v}-\frac{1}{2}v^2s^2\frac{\partial^2 a}{\partial s^2}-\frac{1}{2}\sigma_{1}^2v^2\frac{\partial^2a}{\partial v^2}\nonumber\\
&& -\: v^2\sigma_{1} s\rho\frac{\partial^2 a}{\partial s \partial v}+\frac{1}{a}v^2s^2(\frac{\partial a}{\partial s})^2+\frac{1}{a}\rho^2 \sigma_{1}^2v^2(\frac{\partial a}{\partial v})^2+2v^2\sigma_{1} s\rho\frac{\partial a}{\partial s}\frac{\partial a}{\partial v}, \label{eq_a}\\
\frac{\partial b}{\partial t} &=& [\frac{\sigma_{1}}{\sigma_{2}}v\rho\mu_{2}-\mu_{1}v]\frac{\partial b}{\partial v}-\frac{1}{2}\sigma_{1}^2v^2\frac{\partial^2b}{\partial v^2}-\frac{1}{2}\sigma_{2}^2s^2\frac{\partial^2b}{\partial s^2}-\sigma_{2}\sigma_{1}sv\rho\frac{\partial^2b}{\partial v \partial s}\nonumber\\
&& +\: \frac{\sigma_{1}^2v^2}{a}(\rho^2-1)\frac{\partial a}{\partial v}\frac{\partial b}{\partial v},\label{eq_b} \\
\frac{\partial c}{\partial t} &=& -(\mu_1 v\frac{\partial c}{\partial v}-\mu_{2}s\frac{\partial c}{\partial s}-\frac{1}{2}\sigma_{1}^2v^2\frac{\partial^2c}{\partial v^2}-\sigma_{2}\sigma_{1}sv\rho\frac{\partial^2c}{\partial s \partial v}-\frac{1}{2}\sigma_{2}^2s^2\frac{\partial^2c}{\partial s^2}\nonumber\\
&& +\: a\sigma_{1}^2v^2(\rho^2-1)(\frac{\partial b}{\partial v})^2\label{eq_c},
\end{eqnarray}
with boundary conditions
\begin{equation*}
  a(T, s, v) = 1,
\end{equation*}
\begin{equation*}
  b(T, s, v) = F(s, v),
\end{equation*}
\begin{equation*}
  c(T, s, v) = 0.
\end{equation*}
\end{theorem}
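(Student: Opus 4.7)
The plan is a direct verification of the ansatz. I substitute $V(t,p,s,v) = a(t,s,v)\bigl[p-b(t,s,v)\bigr]^2 + c(t,s,v)$ into the HJB equation \eqref{HJB} and show that the resulting identity in $(p,s,v)$ separates, when coefficients of powers of $(p-b)$ are matched, into three PDEs that are precisely \eqref{eq_a}, \eqref{eq_b}, and \eqref{eq_c}.

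First I use the first-order condition from \eqref{HJB} to write the minimizing $\theta^\star$ as
\[\theta^\star = -\frac{\mu_2 s V_p + \sigma_2^2 s^2 V_{sp} + \rho\sigma_1\sigma_2 sv V_{vp}}{\sigma_2^2 s^2 V_{pp}},\]
after which the infimum over $\theta_t$ in \eqref{HJB} collapses to the single term
\[-\frac{\bigl(\mu_2 s V_p + \sigma_2^2 s^2 V_{sp} + \rho\sigma_1\sigma_2 sv V_{vp}\bigr)^2}{2\sigma_2^2 s^2 V_{pp}}.\]
Under the ansatz one has $V_p = 2a(p-b)$, $V_{pp} = 2a$, $V_{sp} = 2a_s(p-b) - 2ab_s$, and $V_{vp} = 2a_v(p-b) - 2ab_v$, so this reduced term is a polynomial of degree two in $(p-b)$ divided by $a$. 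Expanding $V_t$, $V_s$, $V_v$, $V_{ss}$, $V_{vv}$, $V_{sv}$ using the ansatz likewise produces a polynomial of degree two in $(p-b)$, so the HJB becomes a polynomial identity in $(p-b)$ that must hold for every $p$.

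Matching powers, the $(p-b)^2$-coefficient yields the equation for $a$ in \eqref{eq_a}, the $(p-b)^0$-coefficient yields the equation for $c$ in \eqref{eq_c}, and the $(p-b)^1$-coefficient, after dividing by the strictly positive factor $2a$, yields the equation for $b$ in \eqref{eq_b}. Positivity of $a$ is inherited from the fact that $V$ is the expected square of a random variable and is strictly convex in $p$. The terminal conditions follow immediately from $V(T,p,s,v) = [p - F(s,v)]^2$, which is already in ansatz form with $a(T,\cdot) = 1$, $b(T,\cdot) = F$, and $c(T,\cdot) = 0$.

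The main obstacle is bookkeeping rather than conceptual. The second derivatives $V_{ss}$, $V_{vv}$, $V_{sv}$ of the product ansatz generate many cross-terms (e.g.\ $a_s b_s (p-b)$, $a b_s^2$, $(p-b)b_{ss}$, and mixed pieces involving $c$), and one must carefully sort each such term into one of the three coefficient groups. Once the sorting is done, matching to \eqref{eq_a}--\eqref{eq_c} is mechanical. It is worth noting that the triangular structure of the system — $a$ independent of $b$ and $c$, and $b$ independent of $c$ — is a genuine output of the calculation and reflects the well-known fact that in a geometric Brownian motion market MVH decouples the price of the optimal replicating portfolio from the residual hedging error.
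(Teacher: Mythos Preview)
Your approach is correct and is the standard verification of the quadratic ansatz for the MVH value function: compute the optimal $\theta^\star$ from the first-order condition, substitute back into the HJB, expand the ansatz, and match coefficients of $(p-b)^2$, $(p-b)^1$, $(p-b)^0$. The paper itself does not prove this theorem; it is quoted from \cite{Bertsimas} and stated without proof, so there is no ``paper's own proof'' to compare against. Your sketch matches the argument in \cite{Bertsimas} and would serve as a self-contained derivation if one were needed here.
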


Given the quadratic structure of the value process, we can write $V_t$ as a function of $p$:
\begin{equation*}
 V(t,p,s,v) = a(t, s,v)\cdot[p-b(t, s, v)]^2+c(t, s, v).
\end{equation*}
\cite{Bertsimas} shows that $a(t, s, v)$ is positive, hence the initial wealth $p^{\star}(t,s,v)$ that minimizes the quadratic function is exactly $b(t, s,v)$. The optimal-replication strategy is the $\theta$ corresponding to this initial wealth $p^{\star}(t,s,v)$, and the minimum replication error over all $p$ is $\varepsilon^{\star} (t,s,v)=\min_p \sqrt{V(t,p,s,v)} = \sqrt{c(t, s, v)}$. 

\section{A Pricing Model} \label{sec:pricing}
In this section, we propose a pricing model to price a contingent claim of the form $F(V_T)$.  Because the pay-off is only a function of $V_T$ and $T$ but not $S_T$, it turns out that the functions $a$, $b$ and $c$ are also only functions of $t$ and $v$ but not $s$.  (See appendix for a justification.)  

We propose the following formula to calculate the price of a contingent claim $F(V_T)$: 
\begin{equation}\label{proposed}
 B(t, V_t) = b(t, V_t)\cdot e^{-\kappa \cdot \frac{c(t, V_t)}{d(t,V_t)}},
\end{equation}
where $b(t, V_t)$ is the cost of building an optimal-replicating portfolio for the claim $F(V_T)$ at time $t$, and $c(t, V_t)$ is the expected squared error of the replication.  $d(t,V_t)$ is a suitable normalization factor; it is not specified at the moment, but we assume $d(t,v)$ to be strictly positive, first order differentiable in $t$ and second order differentiable in $v$, and
\begin{equation}\label{limd} 
\lim_{t\rightarrow T} \frac{c(t, v)}{d(t,v)}=0.
\end{equation} 
Here let us comment on the main features of formula \eqref{proposed}. The price of the contingent claim $B(t, V_t)$ converges to the payoff $F( V_T)$ as $t$ approaches $T$. This is due to boundary condition of $b(t, v)$ and the assumption \eqref{limd}.  Recall $b(t,V_t)\rightarrow F(V_T)$ as $t\rightarrow T$.  Also, because $c(t,v)\rightarrow 0$ as $t\rightarrow T$, the assumption \eqref{limd} will be satisfied for a wide range of choices for $d$.   We interpret the term $b(t, V_t)$ as a benchmark price.  Indeed, in a sense, it is the price of the closest traded instrument available in the market. However, the variability of this benchmark instrument is less than the variability of the contingent claim, therefore it makes sense for the investor to ask for a discount from the benchmark price; which is the rationale for the term $e^{-\kappa \cdot \frac{c(t, V_t)}{d(t,v)}}$.  Here $\kappa$ is a preference parameter, a higher value indicates a higher level of risk aversion.  Note that the mean squared replication error $c(t,V_t)$ has been normalized by $d(t,V_t)$, which allows the investor to measure the replication error in relative terms. A possible choice for $d(t,v)$ is the conditional variance of $F(V_T)$ given $S_t=x,V_t=v$. Simpler normalization factors can be chosen depending on the type of the contingent claim.  For example, when $F(V_T)=\min(D,V_T)$, that is, a Merton style bond, it may make sense to choose $d(t,v)=v^2$. That is, the replication error is measured relative to the value of the firm.  This not only gives a better performance measure for the replication (for example $c(t,V_t)=800$ would be more alarming if the firm value were $10$, as compared to $100$), but also makes $\kappa$ a unitless constant.  In general, we need normalization for technical reasons, in particular, to show that the pricing formula is arbitrage free.        

In the rest of this section, we investigate whether the proposed pricing model is arbitrage free.  We use the result of Dalbean and Schachermayer \cite{DalSch} which states that there is no arbitrage in the sense of \textit{no free lunch with vanishing risk} if and only if there exists an equivalent probability measure $\mathbb{Q}$ rendering the price processes local martingales. Here  note that $B_t$ and $S_t$ are locally bounded semimartingales.  Our first goal is to find two processes $\lambda_1(t)$ and $\lambda_2(t)$ such that if $\tilde{W}^i_t=\int_0^t\lambda_i(s)ds+  W^i_t$ then for $t\in [0,T]$,
\begin{eqnarray} 
\label{stcond_1}
B_t&=&B_0+ \int_0^t N_t d\tilde{W}^{1}_t\\ 
S_t&=&S_0+ \int_0^t L_t d\tilde{W}^{2}_t \label{stcond2}
\end{eqnarray}
for some continuous and adapted processes $N_t$ and $L_t$.  Then the next question is when there is a probability measure $\mathbb{Q}$ under which  $\tilde{W}^{1}_t$ and $\tilde{W}^{2}_t$ are martingales.



\begin{theorem}\label{noarbitrage} 

Let $\tilde{c}(t,v)= \frac{c(t,v)}{d(t,v)}$.  
Assume $\frac{\partial b}{\partial v}-\kappa b\frac{\partial \tilde{c}}{\partial v}>0$ for all $(t,v)\in (0,T)\times (0,\infty)$. Let $L_t=S_t\sigma_2$, and $(M_t)_{t\in [0,T]}$ and $(N_t)_{t\in[0,T]}$ be as defined below:  
\begin{eqnarray}
M_t &=&e^{-\kappa \tilde{c}_t} \left[\frac{\partial b}{\partial t}(t,V_t)+\frac{\partial b}{\partial{v}}(t,V_t)\mu_1 V_t+\frac{1}{2}\frac{\partial^2 b}{\partial{v^2}}(t,V_t)\sigma_1^2 V_t^2\right]\label{formulaforM}\\
&&-\kappa b_t e^{-\kappa \tilde{c}_t}\left[\frac{\partial \tilde{c}}{\partial t}(t,V_t)+\frac{\partial \tilde{c}}{\partial{v}}(t,V_t)\mu_1 V_t+\frac{1}{2}\frac{\partial^2 \tilde{c}}{\partial{v^2}}(t,V_t)\sigma_1^2 V_t^2\right]\nonumber\\
&&+  \frac{1}{2} \kappa^2 b_t e^{-\kappa \tilde{c}}\frac{\partial \tilde{c}}{\partial{v}}(t,V_t)^2\sigma_1^2 V_t^2 -\kappa e^{-\kappa \tilde{c}_t} \frac{\partial \tilde{c}}{\partial{v}}(t,V_t)\frac{\partial b}{\partial{v}}(t,V_t)\sigma_1^2 V_t^2,\nonumber\\
\end{eqnarray}
\begin{eqnarray*}
N_t &=& e^{-\kappa \tilde{c}_t} V_t\sigma_{1}\left[\frac{\partial b}{\partial v}(t,V_t)-\kappa b_t\frac{\partial \tilde{c}}{\partial v}(t,V_t)\right].\\
\end{eqnarray*}
Then $N_t>0$ for all $t\in [0,T)$, and equations \eqref{stcond_1} and \eqref{stcond2} hold with $\tilde{W}^i_t=\int_0^t\lambda_i(s) ds+ W^i_t$, where $\lambda_1(t)= \frac{M_t}{N_t}$ and $\lambda_2=\frac{\mu_2}{\sigma_2}$.

Moreover, if 
\begin{equation}
\mathbb{P}(\exp(\frac{1}{2(1-\rho^2)}\int_0^T \left[\frac{1}{2}(\frac{M_t}{N_t})^2+\rho \frac{M_t\mu_2}{N_t\sigma_2}\right]dt) <\infty), \label{novikov}
\end{equation}
then there exists a probability measure $\mathbb{Q}$ equivalent to $\mathbb{P}$ such that $(\tilde{W}^{1}_t,\tilde{W}^{2}_t)_{0\leq t\leq T}$ is a two dimensional Brownian motion with correlation $\rho$.

\end{theorem}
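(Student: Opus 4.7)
The plan is to establish the two claims of the theorem in sequence.

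For the representation identities \eqref{stcond_1}--\eqref{stcond2}, I would apply Itô's formula to $B(t,v)=b(t,v)\,e^{-\kappa\tilde c(t,v)}$. Under the assumed regularity of $d$, combined with the regularity of $b$ and $c$ inherited from the PDE system \eqref{eq_a}--\eqref{eq_c}, $B\in C^{1,2}((0,T)\times(0,\infty))$. Since $V_t$ is driven only by $W^1$, Itô's formula yields $dB_t = M_t\,dt + N_t\,dW^1_t$. The drift $M_t$ decomposes into three blocks: (i) the derivatives of $b$ multiplied by $e^{-\kappa\tilde c}$, producing the first bracket in \eqref{formulaforM}; (ii) the derivatives of $e^{-\kappa\tilde c}$ multiplied by $b$, producing the second bracket together with the quadratic variation contribution $\tfrac12\kappa^2 b e^{-\kappa\tilde c}(\partial_v\tilde c)^2\sigma_1^2V_t^2$; and (iii) the Itô cross term, $-\kappa e^{-\kappa\tilde c}(\partial_v b)(\partial_v\tilde c)\sigma_1^2V_t^2$. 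The diffusion coefficient factors as $N_t = e^{-\kappa\tilde c_t}V_t\sigma_1(\partial_v b - \kappa b\,\partial_v\tilde c)$, which is strictly positive on $[0,T)$ by the hypothesis $\partial_v b - \kappa b\,\partial_v\tilde c > 0$. Setting $\lambda_1 = M_t/N_t$ then gives $dB_t = N_t\,d\tilde W^1_t$, i.e.\ \eqref{stcond_1}; equation \eqref{stcond2} follows immediately from $dS_t = \sigma_2 S_t(dW^2_t + (\mu_2/\sigma_2)\,dt)$.

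For the existence of the measure $\mathbb Q$, I would invoke Girsanov's theorem on the two-dimensional correlated Brownian motion $(W^1,W^2)$. Represent the pair in terms of independent Brownian motions $B^1,B^2$ via $W^1_t=B^1_t$ and $W^2_t=\rho B^1_t + \sqrt{1-\rho^2}\,B^2_t$. The drift shifts $\lambda_1,\lambda_2$ required for $\tilde W^1,\tilde W^2$ then correspond to shifts $\alpha_1=\lambda_1$ and $\alpha_2=(\lambda_2-\rho\lambda_1)/\sqrt{1-\rho^2}$ for $B^1,B^2$. Consider the candidate density
\begin{equation*}
Z_T = \exp\Bigl(-\int_0^T \alpha_1\,dB^1_s - \int_0^T \alpha_2\,dB^2_s - \tfrac12\int_0^T (\alpha_1^2+\alpha_2^2)\,ds\Bigr).
\end{equation*}
An elementary calculation gives $\alpha_1^2+\alpha_2^2 = (\lambda_1^2 - 2\rho\lambda_1\lambda_2 + \lambda_2^2)/(1-\rho^2)$, so hypothesis \eqref{novikov} plays the role of Novikov's condition for $Z_T$. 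Hence $Z_T$ is a strictly positive martingale, defining an equivalent probability $\mathbb Q$ on $\mathcal F_T$ under which the shifted processes $\tilde B^i_t = B^i_t + \int_0^t \alpha_i\,ds$ are independent Brownian motions. The identities $\tilde W^1=\tilde B^1$ and $\tilde W^2=\rho\tilde B^1 + \sqrt{1-\rho^2}\,\tilde B^2$ then exhibit $(\tilde W^1,\tilde W^2)$ as a two-dimensional Brownian motion with correlation $\rho$ under $\mathbb Q$, as required.

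The main obstacle is the bookkeeping in the Itô expansion of $B$: one must combine contributions from $\partial_t, \partial_v, \partial^2_{vv}$ acting on a product of two non-trivial smooth functions, and must not miss the quadratic variation of $e^{-\kappa\tilde c}$ or the cross term involving $\partial_v b\,\partial_v\tilde c$, which together produce the last two lines of \eqref{formulaforM}. A secondary subtlety is recognising hypothesis \eqref{novikov} as the Novikov condition for the two-dimensional drift $(\alpha_1,\alpha_2)$, which is the content of the elementary identity for $\alpha_1^2+\alpha_2^2$ above.
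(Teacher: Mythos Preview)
Your proposal is correct and follows essentially the same approach as the paper: Itô's formula applied to $b\,e^{-\kappa\tilde c}$ to identify $M_t$ and $N_t$, followed by Girsanov with Novikov's condition to produce $\mathbb{Q}$. The only cosmetic difference is that you first decorrelate $(W^1,W^2)$ into independent Brownian motions $(B^1,B^2)$ before applying Girsanov, whereas the paper writes the Radon--Nikodym density directly against $(W^1,W^2)$ with coefficients $\frac{\lambda_i-\rho\lambda_{i^*}}{1-\rho^2}$ and then invokes L\'evy's characterization; the two computations are equivalent and yield the same quadratic variation $\frac{1}{1-\rho^2}(\lambda_1^2-2\rho\lambda_1\lambda_2+\lambda_2^2)$.
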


\begin{proof}

We first show that the equations \eqref{stcond_1} and \eqref{stcond2} are true.  Note that
\begin{eqnarray*}
S_0+ \int_0^t L_t d\tilde{W}^{2}_t& =&S_0+\int_0^t S_s\sigma_2 \frac{\mu_2}{\sigma_2}ds+\int_0^t S_s\sigma_2 dW_s^2\\
&=& S_t.
\end{eqnarray*}

Similarly,
\begin{eqnarray*}
B_0+ \int_0^t N_t d\tilde{W}^{1}_t&=& B_0+\int_0^t \frac{M_s}{N_s} N_s ds+\int_0^t N_s dW^1_s\\
&=&B_0+\int_0^t M_s ds+ \int N_s dW^1_s\\
&=&B_0+\int_0^t\left( \frac{\partial}{\partial t} (be^{-\kappa \tilde{c}})(s,V_s)+\frac{1}{2}\frac{\partial^2}{\partial v^2} (b e^{-\kappa \tilde{c}})(s,V_s)\sigma_1^2V_s^2\right)ds\\
&& +  \int_0^t \frac{\partial b}{\partial v}{ e^{-\kappa \tilde{c}}} (s,V_s)dV_s \\
&=&b(t,V_t) e^{-\kappa \tilde{c}(t,V_t)},
\end{eqnarray*}
where the last line follows from the Ito formula.

Note that because $N_t>0$, and both $M_t$ and $N_t$ are continuous, the process 
\[X_t=-\int_0^t \frac{\lambda^1_s-\rho\lambda^2_s}{1-\rho^2} dW^1_s-\int_0^t\frac{\lambda^2_s-\rho\lambda^1_s}{1-\rho^2} dW^2_s\] 
is a continuous local martingale. Let $(Z_t)_{t\in[0,T]}$ be the stochastic exponential of $X_t$.  By the well known Novikov condition (see e.g. \cite{KS}), $(Z_t)_{t\in[0,T]}$ is a  strictly positive martingale if $\mathbb{E}(\exp \frac{1}{2} <X_T,X_T>)<\infty$.  An elementary calculation gives
\begin{eqnarray*}
<X_t,X_t>=\frac{1}{1-\rho^2} \int_0^t \left[(\lambda^1_s)^2+\frac{1}{2}(\lambda^2_s)^2 -2\rho \lambda^1_s\lambda^2_s\right] ds.
\end{eqnarray*}
Hence, we have 
\[\exp(\frac{1}{2}<X_T,X_T>)=\exp(\frac{1}{2(1-\rho^2)}\int_0^T\left[ \left(\frac{M_t}{N_t}\right))^2+ \left(\frac{\mu_2}{\sigma_2}\right)^2-2\rho \frac{M_t\mu_2}{N_t\sigma_2}\right]dt.\]
Hence, the hypothesis \eqref{novikov} implies that the Novikov condition is satisfied. Therefore, $Z_t$ is a martingale with $Z_0=1$. We define the probability measure $\mathbb{Q}$ equivalent to $\mathbb{P}$ as $\mathbb{Q}(A)=\mathbb{P}(Z1_A)$.   Since $dZ_t=Z_t \left(\frac{\lambda^1_s-\rho\lambda^2_s}{1-\rho^2} dW^1_s-\int_0^t\frac{\lambda^2_s-\rho\lambda^1_s}{1-\rho^2} dW^2_s\right)$, for $i=1,2$
\begin{eqnarray*}
d(Z_t\tilde{W}^i_t)&=&Z_td\tilde{W}^i_t+\tilde{W}^i_t dZ_t+d<Z_t,\tilde{W}^i_t>\\
&=&Z_tdW^i_t+\tilde{W}^i_tdZ_t+Z_t\lambda^i_tdt+ Z_t(\frac{\lambda^i_s-\rho\lambda^{i^*}_s}{1-\rho^2} ds+\rho \frac{\lambda^i_s-\rho\lambda^{i}_s}{1-\rho^2})dt\\
&=&Z_tdW^i_t+\tilde{W}^i_tdZ_t
\end{eqnarray*}
where $i^*=2$ if $i=1$, and $i^*=1$ if $i=2$. Hence, both $\tilde{W}^1$ and $\tilde{W}^2$ are $\mathbb{Q}$ local martingales which implies that $(\tilde{W}^1_t, \tilde{W}^2_t)_{t\in[0,T]}$ is a two dimensional Brownian motion with correlation $\rho$ by Levy's characterization theorem.   

\end{proof}

\section{Application to credit risk pricing} \label{sec:bondpricing}
In this section, we apply the results of the previous section to the credit risk pricing model.  The price of the corporate bond is
\begin{equation*}
 B(t, s, v) = b(t, s, v)\cdot e^{-\kappa \cdot \frac{c(t, s, v)}{v^2}},
\end{equation*}
where the parameters $a(t, s, v)$, $b(t, s, v)$ and $c(t, s, v)$ are the solutions to the system of PDEs \eqref{eq_a}, \eqref{eq_b}, \eqref{eq_c},
with boundary conditions
\begin{equation}
  a(T, s, v) = 1, \label{eq:bdca}
\end{equation}
\begin{equation}
  b(T, s, v) = \min(v, D), \label{eq:bdcb}
\end{equation}
\begin{equation}
  c(T, s, v) = 0. \label{eq:bdcc}
\end{equation}

It is possible to solve the above system analytically, and derive explicit formulas for the functions $a$, $b$ and $c$ (see Appendix for details).  
\begin{theorem}  The p.d.e system  \eqref{eq_a}, \eqref{eq_b}, \eqref{eq_c}  with the boundary conditions \eqref{eq:bdca}, \eqref{eq:bdcb}, \eqref{eq:bdcc} have unique solutions $a(t,s,v)$,$b(t,s,v)$ and $c(t,s,v)$ where 

a.  $a(t,s,v)= e^{-\left(\frac{\mu_2}{\sigma_2}\right)^2 {T-t}}$;\\

b. $b(t,s,v)= b(t, v) =  ve^{\left(\mu_1-\frac{\mu_2\rho\sigma_1}{\sigma_2}\right)(T-t)}N(d_1)+D\left(1-N(d_2)\right)$, where
\begin{eqnarray}
d_1(t,v)&=&  \frac{\ln \frac{D}{v}-\sigma_1^2\left[\frac{1}{2}+\frac{1}{\sigma_1^2}(\mu_1-\frac{\sigma_1\rho \mu_2}{\sigma_2})\right](T-t)}{\sigma_1\sqrt{T-t}},\\
d_2(t,v)&=& d_1+ \sigma_1\sqrt{T-t};
\end{eqnarray}

c.  $c(t,s,v)= \sigma^2 (1-\rho^2) \mathbb{E}\left(\int_t^T a_u V_u^2 \left(\frac{\partial b}{\partial v}(u,V_u)\right)^2 du|V_t=v\right)$.

\end{theorem}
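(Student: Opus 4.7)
The strategy is to exploit the fact that the terminal pay-off $\min(v,D)$ is independent of $s$. I will search for solutions of the coupled system that depend only on $(t,v)$, in which case all $s$-derivatives vanish and the PDEs collapse dramatically. Once the closed-form candidates are in hand, I will verify they solve the full three-variable system and appeal to uniqueness via the stochastic-control representation from Section \ref{sec:scp}, working in a class of functions of polynomial growth.

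For part (a), the ansatz $a(t,s,v)=\alpha(t)$ annihilates every spatial-derivative term in \eqref{eq_a}, leaving the ODE $\alpha'(t)=(\mu_2/\sigma_2)^2\alpha(t)$ with $\alpha(T)=1$, which integrates to $\alpha(t)=e^{-(\mu_2/\sigma_2)^2(T-t)}$. For part (b), this $v$-independence kills the final term of \eqref{eq_b}, and the ansatz $b(t,s,v)=\beta(t,v)$ reduces the equation to the linear backward PDE
\begin{equation*}
\partial_t\beta+\Bigl(\mu_1-\tfrac{\sigma_1\rho\mu_2}{\sigma_2}\Bigr)v\,\partial_v\beta+\tfrac12\sigma_1^2 v^2\partial_v^2\beta=0,\qquad \beta(T,v)=\min(v,D).
\end{equation*}
This is the Kolmogorov backward equation of a geometric Brownian motion $\widehat V$ with drift $\hat\mu:=\mu_1-\sigma_1\rho\mu_2/\sigma_2$ and volatility $\sigma_1$, so Feynman--Kac yields $\beta(t,v)=\mathbb E[\min(\widehat V_T,D)\mid \widehat V_t=v]$. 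Splitting $\min(\widehat V_T,D)=\widehat V_T\mathbf1_{\{\widehat V_T\le D\}}+D\mathbf1_{\{\widehat V_T>D\}}$ and integrating against the lognormal density of $\widehat V_T$ is a routine Black--Scholes-style calculation; the usual substitution identifies $\{\widehat V_T\le D\}$ with $\{Z\le d_2\}$ for the stated $d_2$, and the shift-by-$\sigma_1\sqrt{T-t}$ produced by $\mathbb E[\widehat V_T\mathbf1_{\widehat V_T\le D}]$ produces the stated $d_1$, giving the advertised closed form.

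For part (c), with $a,b$ both independent of $s$ the ansatz $c(t,s,v)=\gamma(t,v)$ reduces \eqref{eq_c} to the linear inhomogeneous equation
\begin{equation*}
\partial_t\gamma+\mu_1 v\,\partial_v\gamma+\tfrac12\sigma_1^2 v^2\partial_v^2\gamma=-a(t)\sigma_1^2(1-\rho^2)v^2\bigl(\partial_v\beta\bigr)^2,\qquad\gamma(T,v)=0.
\end{equation*}
The operator on the left is the $\mathbb P$-generator of $V$, so Feynman--Kac gives exactly the stated representation. The main obstacle is uniqueness and justifying the $s$-independence without an a priori appeal. I would handle both at once by verifying directly that the explicit triple constructed above satisfies the full three-variable PDE system with the prescribed terminal data, and then invoking uniqueness of the value function $V(t,p,s,v)$ of the stochastic control problem of Section \ref{sec:scp}; since $V$ is quadratic in $p$, matching the three coefficients $a$, $b$, $c$ (using $a>0$) forces uniqueness termwise. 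Growth conditions for the Feynman--Kac verification are easy because $\beta$ and $\partial_v\beta$ are dominated by affine functions of $v$, and all moments of the lognormal law of $V_u$ are finite.
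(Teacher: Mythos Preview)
Your proposal is correct and follows essentially the same overall strategy as the paper: exploit the $s$-independence of the terminal data to reduce each PDE to an equation in $(t,v)$ only, and solve the resulting equations in sequence. The paper's treatment of $a$ is identical to yours, and its proof of the representation in (c) is precisely the It\^o/Feynman--Kac argument you outline (applied to $c(t,V_t)$ under $\mathbb{P}$).

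The one genuine stylistic difference is in part (b) and in the uniqueness argument. For $b$, the paper (in its Appendix) proceeds analytically: it passes to $\tau=T-t$, $u=\ln v$, then removes the first-order term via a multiplicative factor $e^{\eta u+\beta\tau}$ to reduce to the pure heat equation, and integrates against the heat kernel to get the closed form. You instead recognise the reduced PDE as the Kolmogorov backward equation of a geometric Brownian motion with drift $\hat\mu=\mu_1-\sigma_1\rho\mu_2/\sigma_2$ and invoke Feynman--Kac plus a standard lognormal computation. For uniqueness, the paper appeals to parabolic PDE theory (Lady\v{z}enskaja for $a$, Evans for the heat equation for $b$ and $c$), whereas you invoke uniqueness of the stochastic-control value function $V(t,p,s,v)$ and read off uniqueness of the coefficients from the quadratic-in-$p$ decomposition with $a>0$. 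Both routes are valid; yours is more probabilistic and avoids the explicit change-of-variables bookkeeping, while the paper's is self-contained at the PDE level and does not require re-invoking the control-theoretic representation.
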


\begin{proof}
See Appendix for the analytical derivations of the formulas for $a$, $b$ and $c$.  In particular, all three functions are independent of $s$. Hence we will denote them as $a(t,v)$,$b(t,v)$ and $c(t,v)$.  Below we only prove the representation in (c). 

Applying Ito's formula to function to $c(t,V_t)$, we get
\begin{eqnarray*}
c(T, V_T)-c(t,V_t)&=&\int_t^T (\frac{\partial c}{\partial t}(u, V_u)+\frac{1}{2} \frac{\partial^2 c}{\partial v^2}(u,V_u)) d\langle V_u,V_u \rangle  + \int_{t}^{T}\frac{\partial c}{\partial v} (u, V_u) dV_u\\
&=&\int_t^T (\frac{\partial c}{\partial t}(u, V_u)+ \mu_{1}v\frac{\partial c}{\partial v}(u,V_u)+ \frac{1}{2}\sigma_1^2 v^2 \frac{\partial^2 c}{\partial v^2}(u,V_u))du\\
&& +\int_t^T \sigma_1\frac{\partial c}{\partial v} (u, V_u)V_u dW_u.
\end{eqnarray*}
Since 
\begin{eqnarray*}
\frac{\partial c}{\partial t} &=& -\mu_{1}v\frac{\partial c}{\partial v}-\frac{1}{2}\sigma_{1}^2v^2\frac{\partial^2c}{\partial v^2}+a(t,v)\sigma_{1}^2v^2(\rho^2-1)(\frac{\partial b}{\partial v})^2,
\end{eqnarray*}
we get that 
\begin{eqnarray*}
c(T, V_T)-c(t,V_t)&=&\int_t^T a(u,V_u)\sigma_{1}^2V_u^2(\rho^2-1)(\frac{\partial b}{\partial v}(u,V_u))^2du + \int_t^T \sigma_1\frac{\partial c}{\partial v} (u, V_u)V_u dW_u.
\end{eqnarray*}
Note that $c(T, V_T)=0$.  Hence,   
\begin{eqnarray*}
c(t,V_t)&=& \mathbb{E}\left[\int_t^T a(u,V_u)\sigma_{1}^2V_u^2(1-\rho^2)(\frac{\partial b}{\partial v}(u,V_u))^2du|\mathcal{F}_t\right].
\end{eqnarray*}
Now, the representation in (c) follows from the Markov property of $V_t$.

\end{proof}

\subsection{Analysis of the function $c(t,v)$}

We derive a more explicit formula for the function $c$ and analyze its qualitative properties:

\begin{theorem}  $\frac{c(t,v)}{v^2}$ is monotone decreasing in $v$.  In particular,
\begin{eqnarray}
c(t,v)&=& \sigma_1^2 (1-\rho^2) v^2 e^{[(2\mu_1+\sigma_1^2)](T-t)}\nonumber\\
 &&\int_t^T e^{\left(-(\frac{\mu_2}{\sigma_2})^2-\sigma_1^2-2\frac{\mu_2}{\sigma_2}\rho \sigma_1\right) (T-u)} \mathbb{E}(N(\textbf{d})^2) du\label{formulaforc}
\end{eqnarray}
where $\textbf{d}=\textbf{d}(u,t,v)$ is normally distributed with mean 
\begin{eqnarray}
\mu(t,u,v)&=&\frac{\ln D-\ln v-(\mu_1+\frac{3}{2}\sigma_1^2)(u-t)-\sigma_1^2(\frac{1}{2}+\frac{\alpha}{\sigma_1^2})(T-u)}{\sigma_1 \sqrt{T-u}}\label{eq:mean}
\end{eqnarray}
with variance $\frac{u-t}{T-u}$, where $\alpha=\mu_1-\rho\frac{\mu_2\sigma_1}{\sigma_2}$.  
\end{theorem}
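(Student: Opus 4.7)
The plan is to start from the representation
\[
c(t,v) = \sigma_1^2 (1-\rho^2)\, \mathbb{E}\!\left[\int_t^T a(u)\, V_u^2 \left(\frac{\partial b}{\partial v}(u,V_u)\right)^2 du \,\bigg|\, V_t=v\right]
\]
established in part (c) of the preceding theorem, derive the explicit formula \eqref{formulaforc}, and then read off monotonicity directly from that expression.

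Setting $\alpha = \mu_1 - \rho\mu_2\sigma_1/\sigma_2$ so that $b(t,v) = v e^{\alpha(T-t)} N(d_1) + D(1-N(d_2))$, I would differentiate in $v$ and use the standard Black--Scholes-type cancellation: since $d_2-d_1 = \sigma_1\sqrt{T-t}$, a short computation of the ratio $N'(d_2)/N'(d_1)$ gives $v e^{\alpha(T-t)} N'(d_1) = D N'(d_2)$, so the $N'$ terms cancel and
\[
\frac{\partial b}{\partial v}(u,v) = e^{\alpha(T-u)} N(d_1(u,v)).
\]
Combining this with the explicit form $a(u) = \exp(-(\mu_2/\sigma_2)^2(T-u))$ puts the representation in the form
\[
c(t,v) = \sigma_1^2(1-\rho^2)\int_t^T e^{(2\alpha - (\mu_2/\sigma_2)^2)(T-u)} \,\mathbb{E}\!\left[V_u^2\, N(d_1(u,V_u))^2 \,\big|\, V_t=v\right] du.
\]

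Next, I would write $V_u = v \exp((\mu_1-\sigma_1^2/2)(u-t) + \sigma_1 Z)$ with $Z \sim N(0,u-t)$ and apply the Gaussian identity $\mathbb{E}[e^{2\sigma_1 Z} g(Z)] = e^{2\sigma_1^2(u-t)} \mathbb{E}[g(Z + 2\sigma_1(u-t))]$ to absorb the factor $V_u^2 = v^2 e^{(2\mu_1-\sigma_1^2)(u-t)} e^{2\sigma_1 Z}$. This produces the prefactor $v^2 e^{(2\mu_1+\sigma_1^2)(u-t)}$, while $d_1(u,V_u)$ transforms into an affine function of $Z$; a direct substitution shows this is Gaussian with mean $\mu(t,u,v)$ given by \eqref{eq:mean} and variance $(u-t)/(T-u)$. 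Finally, splitting $(u-t) = (T-t)-(T-u)$ and combining with the exponent $(2\alpha - (\mu_2/\sigma_2)^2)(T-u)$ collapses the remaining exponentials to $(2\mu_1+\sigma_1^2)(T-t)$ outside the integral and $(-(\mu_2/\sigma_2)^2 - \sigma_1^2 - 2(\mu_2/\sigma_2)\rho\sigma_1)(T-u)$ inside, matching \eqref{formulaforc} exactly.

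For the monotonicity claim, dividing by $v^2$ isolates all $v$-dependence inside $\mathbb{E}[N(\textbf{d})^2]$, and $v$ enters $\mu(t,u,v)$ only through the term $-\ln v/(\sigma_1\sqrt{T-u})$, which is strictly decreasing in $v$. Since $N$ is strictly increasing, for any Gaussian $\textbf{d}_\mu$ at fixed variance the map $\mu \mapsto \mathbb{E}[N(\textbf{d}_\mu)^2]$ is strictly increasing; hence $\mathbb{E}[N(\textbf{d})^2]$ is strictly decreasing in $v$, and so is $c(t,v)/v^2$. The main obstacle is bookkeeping: verifying that after the Gaussian shift the exponents in $(T-u)$ and $(u-t)$ recombine as claimed; the substantive input reduces to one Black--Scholes identity for $N'$ and a single Cameron--Martin-style translation of the driving Brownian increment.
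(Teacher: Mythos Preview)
Your proposal is correct and follows essentially the same route as the paper: the paper absorbs the factor $V_u^2$ via a Girsanov change of measure (so that $W$ acquires drift $2\sigma_1$ under $\tilde{\mathbb{Q}}$), which is exactly your elementary Gaussian shift $\mathbb{E}[e^{2\sigma_1 Z}g(Z)]=e^{2\sigma_1^2(u-t)}\mathbb{E}[g(Z+2\sigma_1(u-t))]$, and then identifies the law of $d_1(u,V_u)$ under the shifted measure to obtain \eqref{formulaforc}. For the monotonicity the paper differentiates $\mathbb{E}[N(\textbf{d})^2]$ in $v$ and uses $\partial\mu/\partial v<0$, which is the calculus version of your monotonicity-in-$\mu$ argument.
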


\begin{proof}
We find in the Appendix that
\begin{equation}
\frac{\partial b}{\partial v}(u,v)=  e^{\left(\mu_1-\frac{\mu_2\rho\sigma_1}{\sigma_2}\right)(T-u)}N(d_1(u,v)).
\end{equation}
where
\[d_1(u,v)=  \frac{\ln \frac{D}{v}-\sigma_1^2(\frac{1}{2}+\frac{\alpha}{\sigma_1^2})(T-u)}{\sigma_1\sqrt{T-u}}.\]
Let 
\[Z_u= \frac{V_u^2}{V_0^2} e^{[-2\mu_1-\sigma_1^2]u}\]
Note that
\begin{eqnarray*}
Z_u&=& \frac{V_u^2}{V_0^2} e^{[-2\mu_1-\sigma_1^2]u}\\
&=& e^{[2\mu_1-\sigma_1^2]u+2\sigma_1 W_u} e^{[-2\mu_1-\sigma_1^2]u}\\
&=&e^{-2\sigma_1^2u+2\sigma_1 W_u}
\end{eqnarray*}
hence $Z_u$ is a Girsanov density.  Let $\tilde{\mathbb{Q}}$ be the probability measure such that $\frac{d\tilde{\mathbb{Q}}_u}{d\mathbb{P}_u}=Z_u$. Now, we have that
\begin{eqnarray*}
c(t,v)&=& \sigma_{1}^2 (1-\rho^2)v^2 e^{[-2\mu_1-\sigma_1^2]t} \tilde{\mathbb{Q}}(\int_t^T a(u,V_u)e^{[2\mu_1+\sigma^2]u} (\frac{\partial b}{\partial v}(u,V_u))^2du|V_t=v)
\end{eqnarray*}

Now we substitute the formulas for $a$ and $\frac{\partial b}{\partial v}$:
\begin{eqnarray*}
c(t,v)&=& \sigma_{1}^2 (1-\rho^2)v^2 e^{[-2\mu_1-\sigma^2]t}\times\\
&&\tilde{\mathbb{Q}}(\int_t^T e^{-\left(\frac{\mu_2}{\sigma_2}\right)^2 {T-u}}e^{[2\mu_1+\sigma^2]u} e^{\left(2\mu_1-2\frac{\mu_2\rho\sigma_1}{\sigma_2}\right)(T-u)}N(d_1(u,V_u))^2du|V_t=v)\\
&=&\sigma_{1}^2 (1-\rho^2)v^2 e^{[2\mu_1+\sigma^2](T-t)}\times\\
&&\tilde{\mathbb{Q}}(\int_t^T e^{\left[-\left(\frac{\mu_2}{\sigma_2}\right)^2 -\sigma^2-2\frac{\mu_2\rho\sigma_1}{\sigma_2}\right](T-u)}N(d_1(u,V_u))^2du|V_t=v).\\
\end{eqnarray*}

Because $W_u$ has a drift equal to $2\sigma_1$ with respect to $\tilde{\mathbb{Q}}$, the $\tilde{\mathbb{Q}}$-conditional distribution of $\log V_u=\log V_t+(\mu_1-\frac{1}{2}\sigma_1^2)(u-t))+\sigma_1 (W_u-W_t)$ given $V_t=v$ is normal with mean $\log v+(\mu_1-\frac{\sigma_1^2}{2}+2\sigma_1^2)(u-t)$ and standard deviation $\sigma_1 \sqrt{u-t}$.  Therefore $d_1(u,V_u)$ also has normal distribution, with mean given by formula \eqref{eq:mean} and standard deviation $\frac{\sqrt{u-t}}{\sqrt{T-u}}$.  This proves formula \eqref{formulaforc}.  To complete the proof it is sufficient to show that the expected value of $(N(d_1)^2)$ is monotone decreasing in $v$ where 
\[\textbf{d}= \mu(t,u,v)+ \sqrt{\frac{u-t}{T-u}}Z\]
where $Z$ is a standard normal random variable.  Note that
\begin{eqnarray*}
\frac{\partial \mathbb{E}(N(\textbf{d})^2)}{\partial v}&=&\mathbb{E}\left[\frac{\partial (N(\textbf{d})^2)}{\partial v}\right]\\
&=&\mathbb{E}\left[2 N(\textbf{d})(\frac{1}{\sqrt{2\pi}} e^{-\frac{\textbf{d}^2}{2}})\frac{\partial\mu(u,v)}{\partial v}\right].\\
 \end{eqnarray*}
Since $\frac{\partial\mu(t,u,v)}{\partial v}<0$, and all the other terms inside the expectation are always non-negative, $\frac{\partial \mathbb{E}(N(\textbf{d})^2)}{\partial v} $ is negative as desired.  
\end{proof}

\begin{theorem}\label{theo:bound} Assume $\frac{1}{2}+\frac{\alpha}{\sigma_1^2}>0$.  We have $\frac{M_t}{N_t}=\theta(t,V_t)$ for some function $\theta$ such that $\sup_{t\in[0,T], v>0} \theta(t,v)<\infty$.
\end{theorem}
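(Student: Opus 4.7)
The identification $M_t/N_t=\theta(t,V_t)$ is a consequence of the Markov structure: since $a,b,c$ (hence $\tilde c=c/v^2$) depend only on $(t,v)$ and not on $s$, the expressions for $M_t$ and $N_t$ in \eqref{formulaforM} involve only $b$, $\tilde c$ and their first and second partials in $v$ evaluated at $V_t$. Thus $\theta(t,v):=M(t,v)/N(t,v)$ is a well-defined continuous function on $[0,T)\times(0,\infty)$, the denominator being strictly positive by the hypothesis $b_v-\kappa b\tilde c_v>0$ of Theorem \ref{noarbitrage}.

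The plan for bounding $\theta$ is threefold: first cancel $e^{-\kappa\tilde c}$ from numerator and denominator; second, collapse the second-order derivatives using the PDEs satisfied by $b$ and $\tilde c$; third, estimate the resulting rational expression via the closed-form formulas of the preceding theorems. For the second step, since $a$ is independent of both $s$ and $v$, PDE \eqref{eq_b} reduces to $b_t+\alpha v b_v+\tfrac12\sigma_1^2 v^2 b_{vv}=0$ with $\alpha=\mu_1-\rho\mu_2\sigma_1/\sigma_2$, which turns the first bracketed expression in $M$ into $\frac{\rho\mu_2\sigma_1}{\sigma_2}\,v\,b_v$. Substituting $c=v^2\tilde c$ into \eqref{eq_c} and collecting terms yields
\[
\tilde c_t+\mu_1 v\tilde c_v+\tfrac12\sigma_1^2 v^2\tilde c_{vv} \;=\; -a\sigma_1^2(1-\rho^2)(b_v)^2 \;-\; (2\mu_1+\sigma_1^2)\tilde c \;-\; 2\sigma_1^2 v\tilde c_v,
\]
which rewrites the second bracket in $M$ without second-order derivatives. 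After these substitutions $\theta(t,v)$ becomes an explicit finite sum of terms built from $vb_v$, $b(b_v)^2$, $b\tilde c$, $vb\tilde c_v$, $v^2b(\tilde c_v)^2$ and $v^2 b_v\tilde c_v$, divided by $\sigma_1\,v\,(b_v+\kappa b|\tilde c_v|)$ (using $\tilde c_v<0$ from the monotonicity of $\tilde c$ in $v$ established above).

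For the third step, the Appendix gives $b_v(t,v)=e^{\alpha(T-t)}N(d_1)\in[0,e^{|\alpha|T}]$ and $b(t,v)/v=e^{\alpha(T-t)}N(d_1)+(D/v)(1-N(d_2))$, which is uniformly bounded on $[0,T]\times(0,\infty)$ because the Gaussian tail $1-N(d_2)$ decays super-polynomially as $v\to 0$ while $D/v\to 0$ as $v\to\infty$. The representation \eqref{formulaforc} together with the computation of $\partial_v\mathbb{E}(N(\mathbf{d})^2)$ in the preceding proof shows that both $\tilde c$ and $v\tilde c_v$ are uniformly bounded on $[0,T]\times(0,\infty)$. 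With these estimates, each term in the numerator of $\theta$ is dominated by a constant multiple of either $vb_v$ or $\kappa v b|\tilde c_v|$, both appearing in the denominator with the correct sign; hence $\sup_{(t,v)}|\theta(t,v)|<\infty$.

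The principal obstacle is controlling the competing ratios at the corners $v\to 0$, $v\to\infty$, and $t\to T$, where $b_v$, $b/v$, $\tilde c$, and $v\tilde c_v$ may all degenerate at different rates. The hypothesis $\tfrac12+\alpha/\sigma_1^2>0$ is exactly what guarantees that $d_1(t,v)$ and the auxiliary mean $\mu(t,u,v)$ of \eqref{eq:mean} tend to $\pm\infty$ monotonically in the correct direction, so that $b_v$ decays fast enough as $v\to\infty$ to be compensated by $\kappa b|\tilde c_v|$ in the denominator, and symmetrically $\tilde c$ and $v\tilde c_v$ decay fast enough as $v\to 0$ to be compensated by $b_v$. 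Without this sign condition the decay rates could mismatch and the ratio could blow up along a boundary sequence; with it, the term-by-term estimate closes and $\theta$ is uniformly bounded.
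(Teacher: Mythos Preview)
Your overall strategy coincides with the paper's: cancel $e^{-\kappa\tilde c}$, use the PDEs for $b$ and $\tilde c$ to eliminate the second-order derivatives, and then bound each surviving term in the numerator against $\sigma_1 v\,[b_v+\kappa b|\tilde c_v|]\ge \sigma_1\max(vb_v,\,\kappa vb|\tilde c_v|)$. The list of elementary bounds you establish --- $b_v$, $b/v$, $\tilde c$, and $v\tilde c_v$ all bounded on $[0,T]\times(0,\infty)$ --- is correct and matches items (a)--(b) of the paper's proof.

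There is, however, a genuine gap. The assertion ``each term in the numerator is dominated by a constant multiple of either $vb_v$ or $\kappa vb|\tilde c_v|$'' does \emph{not} follow from those four bounds alone. The offending term is the one coming from the $(2\mu_1+\sigma_1^2)\tilde c$ contribution in the PDE for $\tilde c$, namely a constant times $\kappa b\,\tilde c$. To control $b\tilde c/(vb_v)$ you would need $b/(vb_v)$ bounded; to control $b\tilde c/(vb|\tilde c_v|)=\tilde c/(v|\tilde c_v|)$ you would need that ratio bounded. Neither is a consequence of boundedness of $b/v$, $b_v$, $\tilde c$, $v\tilde c_v$ separately, because $b_v\to 0$ as $v\to\infty$ and $v|\tilde c_v|\to 0$ as $v\to 0$. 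Your last paragraph correctly senses that one must switch between the two halves of the denominator according to the size of $v$, but it does not supply the missing estimates.

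The paper fills this gap by proving two additional facts: (c) $b/(vb_v)$ is bounded on $[0,T)\times(0,D]$, via a continuity argument at the corners $(t,0)$ and $(T,v)$; and (d) $\tilde c/(v|\tilde c_v|)$ is bounded on $[0,T)\times(D,\infty)$. Item (d) is the substantial part and is where the hypothesis $\tfrac12+\alpha/\sigma_1^2>0$ is actually used. It amounts to showing that
\[
\frac{\int_t^T e^{K_1(T-u)}\,\mathbb{E}[N(\mathbf d)^2]\,du}
     {\int_t^T \dfrac{e^{K_1(T-u)}}{\sigma_1\sqrt{T-u}}\,\mathbb{E}\!\bigl[N(\mathbf d)e^{-\mathbf d^2/2}\bigr]\,du}
\]
is bounded uniformly in $(t,v)$ with $v>D$; the paper achieves this by splitting $\mathbb{E}[N(\mathbf d)^2]$ according to $\{\mathbf d<-1\}$, $\{\mathbf d\ge -1,\ z(u,v)\le 1\}$, $\{\mathbf d\ge -1,\ z(u,v)>1\}$ and using the sign hypothesis to obtain a uniform lower bound on $z(u,v)$. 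Your proposal asserts that the hypothesis makes the decay rates ``match'' but does not carry out this comparison; as it stands, the argument is incomplete precisely at the step that constitutes the bulk of the proof.
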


\begin{proof}
Using the formula \eqref{formulaforM} and the PDEs for $b$ and $c$ we obtain the following formula for $M_t$:
\begin{eqnarray*}
M_t &=&M(t,V_t)
\end{eqnarray*}
where
\begin{eqnarray*}
M(t,v)&=&e^{-\kappa \tilde{c}} \left[\frac{\sigma_1}{\sigma_2}\rho \mu_2 v\frac{\partial b}{\partial v}\right] -\kappa b e^{-\kappa\tilde{c}}a\sigma_1^2(\rho^2-1)(\frac{\partial b}{\partial v})^2+2\kappa b e^{\kappa\tilde{c}}\frac{\partial \tilde{c}}{\partial v} v\sigma_1^2 \\
&&+ (2\mu_1+\sigma_1^2)\kappa b e^{-\kappa\tilde{c}}\tilde{c}\sigma_1^2+\frac{1}{2}\kappa^2 b e^{-\kappa \tilde{c}}(\frac{\partial \tilde{c}}{\partial v})^2\sigma_1^2v^2-\kappa e^{-\kappa\tilde{c}}\frac{\partial \tilde{c}}{\partial v}\frac{\partial b}{\partial v}\sigma_1^2v^2.
\end{eqnarray*} 
We also have $N_t=N(t,v)$ where
\begin{eqnarray*}
N(t,v)= e^{-\kappa \tilde{c}}v\sigma_1 \left[\frac{\partial b}{\partial v}-\kappa b \frac{\partial \tilde{c}}{\partial v}\right].
\end{eqnarray*}
Hence, we have $\frac{M_t}{N_t}=\theta(t,V_t)$ where $\theta(t,v)= \frac{M(t,v)}{N(t,v)}$.  Because $\left[\frac{\partial b}{\partial v}-\kappa b \frac{\partial \tilde{c}}{\partial v}\right]\geq \max(\frac{\partial b}{\partial v}, \kappa b \left|\frac{\partial \tilde{c}}{\partial v}\right|)$, and noting that $a$, $b$, $e^{-\kappa \tilde{c}}$, $\frac{\partial b}{\partial v}$ and $\tilde{c}$ are bounded functions,  we have  $\sup_{t\in[0,T], v>0} |\theta(t,v)|<\infty$ if we show the following:

\begin{enumerate}[label=(\alph*)]
\item $\frac{b}{v}$ is bounded when $t\in[0,T]$, $v>0$,
\item $\frac{\partial \tilde{c}}{\partial v}.v $ is bounded when $t\in[0,T]$, $v>0$,
\item $\frac{b}{\frac{\partial b}{\partial v}v}$ is bounded when $t\in[0,T]$, $v\leq D$, 
\item $\frac{\tilde{c}}{\frac{\partial \tilde{c}}{\partial v}v}$ is bounded when $t\in[0,T]$, $v>D$.
\end{enumerate}

We estimate each quantity as follows:

\begin{enumerate}[label=(\alph*)] 

\item We note that $(\frac{b}{v})' =-\frac{D(1-N(d_2))}{v}<0$ hence the $\sup_v \frac{b}{v}=\lim_{v\rightarrow 0}\frac{b}{v}=\lim_{v\rightarrow 0}\frac{\partial b}{\partial v}=e^{\left(\mu_1-\frac{\mu_2\rho\sigma_1}{\sigma_2}\right)(T-t)}$ which is bounded. 

\item Let $\textbf{d}$ be normally distributed with mean $\mu(u,v)$ and  
with variance $\frac{u-t}{T-u}$. We observe that

\begin{eqnarray*}
\left|\frac{\partial E(N(\textbf{d})^2)}{\partial v}\right|&=&\left|\frac{\partial}{\partial v}\int_{-\infty}^{\infty} \frac{1}{\sqrt{2\pi}}e^{-\frac{z^2}{2}}N(\mu(u,v)+\sqrt{\frac{u-t}{T-u}} z)^2 dz\right|\\
&=&\int_{-\infty}^{\infty} \frac{1}{\sqrt{2\pi}}e^{-\frac{z^2}{2}}2 N(\mu(u,v)+\sqrt{\frac{u-t}{T-u}} z)  e^{-\frac{(\mu(u,v)+\sqrt{\frac{u-t}{T-u}} z)^2}{2}} \frac{1}{v\sigma_1\sqrt{T-u}} dz\\
&\leq& \frac{2}{v \sqrt{2\pi}\sigma_1\sqrt{T-u}}
\end{eqnarray*}
Therefore,
\begin{eqnarray*}
\left|\frac{\partial \tilde{c}}{\partial v}\right|.v&\leq & \sigma_1^2 (1-\rho^2) e^{[(2\mu_1+\sigma_1^2)](T-t)}\int_t^T e^{\left(-(\frac{\mu_2}{\sigma_2})^2-\sigma_1^2-2\frac{\mu_2}{\sigma_2}\rho \sigma_1\right) (T-u)}\frac{2}{\sqrt{2\pi}\sigma_1\sqrt{T-u}}du\\
&\leq & \tilde{K} \int_t^T \frac{1}{\sigma_1\sqrt{T-u}}du\\
&\leq& K \sqrt{T-t}.
\end{eqnarray*}
for some constants $\tilde{K}$ and $K$.  Hence
$\frac{\partial \tilde{c}}{\partial v}.v$  is bounded as desired.

\item Note that
\begin{eqnarray}
\frac{b(t,v)}{\frac{\partial b}{\partial v}(t,v)v}&=&\frac{ve^{k_1(T-t)}N(d_1(t,v))+D(1-N(d_2(t,v))}{ve^{k_1(T-t)}N(d_1(t,v))}\nonumber\\
&=&1+\frac{D (1-N(d_2(t,v))}{vN(d_1(t,v))}\label{eq:novikov1}.
\end{eqnarray}
Next, observe that $\frac{b(t,v)}{\frac{\partial b}{\partial v}(t,v)v}$ is bounded in the domain $[0,T)\times [D/2,D]$, because $d_1(t,v)$ is bounded below by $K=-|\sigma||\frac{1}{2}+\frac{\alpha}{\sigma_1^2}|T$ and
\[\frac{b(t,v)}{\frac{\partial b}{\partial v}(t,v)v}\leq 1+\frac{D}{vN(K)}\leq 1+\frac{2}{N(K)}, \mbox{\, for\, } (t,v)\in [0,T)\times [D/2,D].\]
We claim that $\frac{b(t,v)}{\frac{\partial b}{\partial v}(t,v)v}$ has a continuous extension to the domain $[0,T]\times [0,D/2]$.  Indeed, let$(t_n,v_n)\rightarrow (t,0)$, $t\in[0,T]$. Then $\lim_{n\rightarrow\infty} d_1(t_n,v_n)= \lim_{n\rightarrow\infty} d_2(t_n,v_n)=\infty$.  Hence for $n$ sufficiently large,
\begin{eqnarray*}
\frac{1-N(d_2(t_n,v_n))}{v_n N(d_1(t_n,v_n))} &\leq & \frac{e^{-\frac{(d_2(t_n,v_n))^2}{2}} e^{-\ln(v_n)}}{\sqrt{2\pi}2}.
\end{eqnarray*}
Since $d_2(t_n,v_n))^2$ is a quadratic polynomial in $\ln(v_n)$ with a positive coefficient of $(\ln(v_n))^2$, we have $\lim_{n \rightarrow\infty} -\frac{d_2(t_n,v_n))^2}{2}-\ln(v_n)=-\infty$, hence by equation \eqref{eq:novikov1},  $\lim \frac{b(t_n,v_n)}{\frac{\partial b}{\partial v}(t_n,v_n)v_n}=0$.  Now, let $(t_n,v_n)\rightarrow (T,v)$, $0<v\leq D/2$.  Then $\lim_{n\rightarrow \infty} d_1(t_n,v_n)= \lim_{n\rightarrow \infty} d_2(t_n,v_n)=\infty$. Since $v_nN(d_1(t_n,v_n))\rightarrow v$ and  $D (1-N(d_2(t_n,v_n))\rightarrow 0$, by equation \eqref{eq:novikov1}, $\lim \frac{b(t_n,v_n)}{\frac{\partial b}{\partial v}(t_n,v_n)v_n}=0$ as $(t_n,v_n)\rightarrow (T,v)$ as well. 

Since $\frac{b(t,v)}{\frac{\partial b}{\partial v}(t,v)v}$ can be extended to a continuous function on $[0,T]\times [0,D/2]$, it must also be bounded on $[0,T]\times [0,D/2]$.  We conclude that $\sup_{(t,v)\in [0,T)\times (0,D]}\frac{b(t,v)}{\frac{\partial b}{\partial v}(t,v)v}<\infty$.

\item  Let $K_1= -(\frac{\mu_2}{\sigma_2})^2-\sigma_1^2-2\frac{\mu_2}{\sigma_2}\rho \sigma_1$.  We observe that 
\begin{eqnarray*}
\frac{\tilde{c}(t,v)}{\frac{\partial\tilde{c}}{\partial v}(t,v)v}&=&K_2\frac{\int_t^T e^{K_1(T-u)}\mathbb{E}(N(\textbf{d})^2)du}{\int_t^T\frac{e^{K_1(T-u)}}{\sigma_1\sqrt{T-u}}\mathbb{E}[N(\textbf{d})e^{-\frac{d_1^2}{2}}]du},
\end{eqnarray*}
where $K_2>0$ is a constant independent of $t$ and $v$.  We define 
\[z(u,v)=  \frac{-\mu(u,v)\sqrt{T-u}}{\sqrt{u-t}},\]
where $\mu(u,v)$ is defined by equation \eqref{eq:mean}.  Note that if we write $\textbf{d}= \frac{\sqrt{u-t}}{\sqrt{T-u}} Z+\mu(u,v)$, then we have $\textbf{d}>0$ if and only if $Z>z(u,v)$.  Next, we decompose $\int_t^T e^{K_1(T-u)}\mathbb{E}(N(\textbf{d})^2)du$ into 3 parts:
\begin{eqnarray*}
\int_t^T e^{K_1(T-u)}\mathbb{E}(N(\textbf{d})^2)du&=&\int_t^T e^{K_1(T-u)}\mathbb{E}(N(\textbf{d})^2 1_{\{\textbf{d}<-1\}})du\\
&& + \int_t^T e^{K_1(T-u)}\mathbb{E}(N(\textbf{d})^2 1_{\{\textbf{d}\geq -1\}})1_{\{z_{u,v}\leq 1\}}du\\
&& + \int_t^T e^{K_1(T-u)}\mathbb{E}(N(\textbf{d})^2 1_{\{\textbf{d}\geq -1\}})1_{\{z_{u,v}> 1\}}du
\end{eqnarray*}
Let $I$, $II$, and $III$ be the integrals on the right side of the equation above.

We have that $N(\textbf{d})^2 1_{\{\textbf{d}<-1\}}\leq N(\textbf{d})\frac{1}{\sqrt{2\pi}}e^{-\frac{\textbf{d}^2}{2}} 1_{\{\textbf{d}<-1\}}$, hence $\mathbb{E}(N(\textbf{d})^2 1_{\{\textbf{d}<-1\}})\leq \frac{1}{\sqrt{2\pi}}\mathbb{E}(N(\textbf{d})e^{-\frac{\textbf{d}^2}{2}})\leq K_3 \frac{\mathbb{E}(N(\textbf{d})e^{-\frac{\textbf{d}^2}{2}})}{\sigma_1\sqrt{T-u}}$ where $K_3=K_2 \frac{1}{\sqrt{2\pi}}\sigma_1\sqrt{T}$.  Hence

\begin{eqnarray*}
\frac{I}{\int_t^T\frac{e^{K_1(T-u)}}{\sigma_1\sqrt{T-u}}\mathbb{E}[N(\textbf{d})e^{-\frac{\textbf{d}^2}{2}}]du}\leq K_3.
\end{eqnarray*}

Next, we argue that $z(u,v)\leq 1$ implies that $\mathbb{E}(N(\textbf{d})e^{-\frac{\textbf{d}^2}{2}})> \epsilon \min(1, \frac{\sqrt {T-u}}{\sqrt{t-u}})$ where $\epsilon$ is a positive number independent of $t$ and $v$.  We first observe that
\[z(u,v)=-\frac{\ln(D/v)}{\sigma_1\sqrt{u-t}}+\frac{\tilde{\alpha}\sqrt{u-t}}{\sigma_1}+\sigma_1(\frac{1}{2}+\frac{\alpha}{\sigma_1^2})\frac{T-u}{\sqrt{u-t}},
\]
where $\tilde{\alpha}=\mu_1+\frac{3}{2}\sigma_1^2$.
Because of the assumptions $v>D$ and  $\frac{1}{2}+\frac{\alpha}{\sigma_1^2} >0$,  $z(u,v)> \frac{-|\tilde{\alpha}|\sqrt{T}}{\sigma_1}$ for all $u,v$.  Now,  $\mathbb{E}(N(\textbf{d})e^{-\frac{\textbf{d}^2}{2}})\geq K_4 \mathbb{P}(z(u,v)-\sqrt{\frac{T-u}{t-u}}<Z< z(u,v))$ for some constant $K_4$,  because $z(u,v)-\sqrt{\frac{T-u}{t-u}}\leq Z\leq  z(u,v)$ implies that $-1\leq \textbf{d}\leq 0$.

If $\sqrt{\frac{T-u}{t-u}}>1$, then   $\mathbb{P}(z(u,v)-\sqrt{\frac{T-u}{t-u}}<Z< z(u,v))\geq e^{-\frac{1}{2}\max(z(u,v)^2,(z(u,v)-1)^2)}$, which is bounded below by $\epsilon>0$ independent of $u$,$v$ or $t$, since $1\geq z(u,v)>-  \frac{\tilde{\alpha}\sqrt{T}}{\sigma_1}$.  If $\sqrt{\frac{T-t}{t-u}}\leq 1$, then $\mathbb{P}(z(u,v)-\sqrt{\frac{T-u}{t-u}}<Z< z(u,v))\geq \sqrt{\frac{T-u}{t-u}}e^{-\frac{1}{2}\max(z(u,v)^2,(z(u,v)-1)^2)}$, which proves the claim.  

Now, we note that $\mathbb{E}(N(\textbf{d})^2 1_{\{\textbf{d}\geq -1\}})\leq \mathbb{P}(Z> z(u,v)-\sqrt{\frac{T-u}{t-u}})$
Hence,
\begin{eqnarray*}
\frac{II}{\int_t^T \frac{e^{K_1(T-u)}}{\sigma_1\sqrt{T-u}}\mathbb{E}[N(\textbf{d})e^{-\frac{\textbf{d}^2}{2}}]du}&\leq&K_5 +\frac{\int_t^T \mathbb{P}(Z>z(u,v))1_{\{z(u,v)\leq 1\}} du}{\epsilon\int_t^T \frac{\min(1, \frac{\sqrt {T-u}}{\sqrt{t-u}})}{\sigma_1\sqrt{T-u}}1_{\{z(u,v)\leq 1\}}du}\\
&\leq& K_5+K_6\frac{\int_t^T e^{K_1(T-u)} 1_{\{z(u,v)\leq 1\}} du}{\int_t^T e^{K_1(T-u)} 1_{\{z(u,v)\leq 1\}}du}\\
&=& K_5+K_6,
\end{eqnarray*}
where $K_6=\frac{\sigma_1 \sqrt{T}}{\epsilon}$.

Finally, we observe that
\begin{eqnarray*}
\frac{III}{\int_t^T \frac{e^{K_1(T-u)}}{\sigma_1\sqrt{T-u}}\mathbb{E}[N(\textbf{d})e^{-\frac{\textbf{d}^2}{2}}]du}&\leq& K_7 +\frac{\int_t^T \mathbb{P}(Z>z(u,v))1_{\{z(u,v)>1\}} du}{\int_t^Te^{-\frac{1}{2}\max(z(u,v)^2,(z(u,v)-1)^2)}  \frac{\min(1, \frac{\sqrt {T-u}}{\sqrt{t-u}})}{\sigma_1\sqrt{T-u}}1_{\{z(u,v)>1\}}du}\\
&\leq &K_7+ \frac{\int_t^T K_8 e^{-\frac{z(u,v)^2}{2}}1_{\{z(u,v)>1\}} du}{\int_t^T e^{-\frac{1}{2}z(u,v)^2} \frac{\min(1, \frac{\sqrt {T-u}}{\sqrt{t-u}})}{\sigma_1\sqrt{T-u}}1_{\{z(u,v)>1\}}du}.
\end{eqnarray*}
Hence,
\begin{eqnarray*}
\frac{III}{\int_t^T \frac{e^{K_1(T-u)}}{\sigma_1\sqrt{T-u}}\mathbb{E}[N(\textbf{d})e^{-\frac{\textbf{d}^2}{2}}]}\leq K_7 +K_8\sigma_1 \sqrt{T} \frac{\int_t^T e^{-\frac{1}{2}z(u,v)^2} 1_{\{z(u,v)>1\}} du}{\int_t^T e^{-\frac{1}{2}z(u,v)^2} 1_{\{z(u,v)>1\}}du}\\
&\leq &  K_7 +\frac{K_8}{\sigma_1 \sqrt{T}}
\end{eqnarray*}



\end{enumerate}

\end{proof}

\begin{corollary} Assume that $\frac{1}{2}+\frac{\alpha}{\sigma_1^2}>0$. Let $\tilde{c}(t,v)=c(t,v)/v^2$. For any $\kappa>0$, $B_t= b(t,V_t)e^{-\kappa \tilde{c}(t,V_t)}$ gives an arbitrage free price for the Merton style bond $\min(V_T,D)$ in the sense of NFLVR.
\end{corollary}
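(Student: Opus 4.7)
The plan is to deduce the corollary directly from Theorem \ref{noarbitrage}, whose conclusion (an equivalent probability measure $\mathbb{Q}$ rendering $\tilde{W}^1$ and $\tilde{W}^2$ correlated Brownian motions) immediately yields NFLVR via the Delbaen--Schachermayer theorem, since $B_t$ and $S_t$ are then continuous, locally bounded $\mathbb{Q}$-local martingales through the representations \eqref{stcond_1} and \eqref{stcond2}. Theorem \ref{noarbitrage} has two hypotheses: (i) strict positivity of $\frac{\partial b}{\partial v} - \kappa b \frac{\partial \tilde{c}}{\partial v}$ on $(0,T)\times(0,\infty)$, and (ii) the Novikov-type condition \eqref{novikov}. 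So the task reduces to verifying these two under the standing assumption $\tfrac{1}{2}+\tfrac{\alpha}{\sigma_1^2}>0$.

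For hypothesis (i) I would combine three ingredients already established. The explicit representation of $b$ gives $\frac{\partial b}{\partial v}(t,v) = e^{(\mu_1-\mu_2\rho\sigma_1/\sigma_2)(T-t)}N(d_1(t,v)) > 0$; positivity of $b$ itself is immediate from the same formula (it is a sum of a nonnegative and a strictly positive term); and the monotonicity theorem proved just above states that $\tilde{c}(t,v)$ is strictly decreasing in $v$, so $\frac{\partial \tilde{c}}{\partial v} < 0$. Since $\kappa > 0$, the product $-\kappa b \frac{\partial \tilde{c}}{\partial v}$ is also strictly positive, and summing two strictly positive quantities yields the required inequality. This step is essentially cost-free given the preceding theorem.

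For hypothesis (ii) I would invoke Theorem \ref{theo:bound}: the standing assumption $\tfrac{1}{2}+\tfrac{\alpha}{\sigma_1^2} > 0$ is precisely what is needed to conclude $M_t/N_t = \theta(t,V_t)$ with $C := \sup_{[0,T]\times(0,\infty)} |\theta| < \infty$. Consequently the integrand inside the exponential of \eqref{novikov} is pathwise bounded uniformly in $t$ by a deterministic constant depending only on $C$, $\rho$, $\mu_2$, $\sigma_2$, so the exponential itself is dominated by a deterministic constant and the probability in \eqref{novikov} equals $1$ trivially. Assembling these two verifications with Theorem \ref{noarbitrage} then closes the argument.

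All the substantive work has already been carried out upstream: the genuine obstacle was the uniform boundedness of $\theta$, which required the delicate four-part estimate (a)--(d) in the proof of Theorem \ref{theo:bound}, together with the monotonicity of $\tilde{c}$ in $v$. Once those two inputs are in hand, the corollary is a matter of plugging into Theorem \ref{noarbitrage} and citing Delbaen--Schachermayer, so I do not expect any additional technical difficulty in the proof itself.
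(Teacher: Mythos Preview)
Your proposal is correct and follows essentially the same route as the paper: verify the positivity hypothesis of Theorem~\ref{noarbitrage} from $\partial b/\partial v>0$ and $\partial\tilde{c}/\partial v<0$, then invoke Theorem~\ref{theo:bound} to deduce that the Novikov-type condition \eqref{novikov} holds, and conclude via Delbaen--Schachermayer. If anything, you are slightly more careful than the paper's own write-up, since you explicitly note that $b>0$ is needed for the product $-\kappa b\,\partial\tilde{c}/\partial v$ to be positive (the paper's proof omits the factor $b$ when restating the hypothesis) and you spell out why uniform boundedness of $\theta$ makes the exponential in \eqref{novikov} deterministically bounded.
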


\begin{proof}  Because $\frac{\partial b}{\partial v}>0$ and $\frac{\partial\tilde{c}}{\partial v}<0$,  $\frac{\partial b}{\partial v}-\kappa \frac{\partial\tilde{c}}{\partial v}>0$.  Also Theorem \eqref{theo:bound} implies that hypothesis \eqref{novikov} of Theorem \eqref{noarbitrage} is satisfied.  Hence by Theorem \eqref{noarbitrage} there exists a probability measure $\mathbb{Q}$ equivalent to $\mathbb{P}$ such that $S_t$ and $B_t$ are $\mathbb{Q}$-local martingales, hence $(S_t, B_t)$ do not allow arbitrage.
\end{proof} 

\begin{remark}  It is possible that no arbitrage property still holds without the assumption $\frac{1}{2}+\frac{\alpha}{\sigma_1^2}>0$, however one may need to use a different criterion other than the Novikov criterion.  A possible approach is to use the absolute continuity of the laws processes of the diffusion type (see e.g. Chapter 7 of \cite{LipShir}). This would require the formulation of $(B_t, S_t)$ as a solution of a stochastic differential equation in the sense of Lipster and Shiryaev (in particular, see section 7.6.4 of \cite{LipShir}).  To keep the presentation simple, we will not pursue this approach in this paper.    
 
\end{remark}
\section{The effects of the parameters on the price} \label{sec:numerical}

In this section, we fix $T=10$, $D=100$ and look at how various parameters affect the price of the Merton style bond.  Examining the formulas we see that there are five  parameters affecting the price:   $\mu_1$, $\theta:=\frac{\mu_2}{\sigma_2}$,  $\rho$, $\sigma_1$ and $\kappa$. 

\subsection{The effect of $\mu_1$}

We observe that $\mu_1$ affect $b$ only through the parameter $\alpha= \mu_1-\frac{\mu_2\rho\sigma_1}{\sigma_2}$.  Hence we can calculate $\frac{\partial b}{\partial \alpha}$ and deduce $\frac{\partial b}{\partial \mu_{1}}$ from this.  An elementary calculation gives
\[\frac{\partial b}{\partial \alpha}(t,v)=v(T-t)e^{\alpha(T-t)}N(d_1),\]
which is always positive.  Since $\frac{\partial\alpha}{\partial\mu_1}=1$, we find that $b$ is strictly increasing with $\mu_1$. Note that in our pricing model, we may interpret $b$ as the price of the Merton style bond when $\kappa=0$.  Hence 
\[y(t)=-\log \frac{D}{b(t,V_t)}\]
is the yield of the bond for the case $\kappa=0$. The fact that $b$ is increasing with $\mu_1$ means that the growth rate of the underlying firm, $\mu_1$, affects the yield of the bond negatively which may be because the bond becomes less risky.

We also observe that when $\alpha=0$, $b$ is the price of the bond in the Merton model.  Hence, in this model, depending on the sign of $\alpha$, the price of the bond can be higher (when $\alpha>0$)  or lower ($\alpha<0$) than its price in the Merton model.  

The relationship between $\mu_1$ and $\tilde{c}$ is not monotone, even when $\alpha$ remains constant.  

\subsection{The effect of $\theta$}

This parameter affects $b$ only through $\alpha$ as well. Since $\frac{\partial\alpha}{\partial\theta}=-\rho \sigma_1$, $b$ is decreasing with $\theta$ if $\rho>0$ and increasing with $\theta$ if $\rho<0$.  If $\rho=0$, $\theta$ has no effect on $b$.    One can interpret $\theta$ as a measure of systemic risk in the underlying security.  $\rho>0$ implies that the optimal replicating portfolio is also positively correlated with the same systemic risk, therefore it is intuitive that this portfolio gives a premium that increases with $\rho$.  Also, the greater $\sigma_1$, the greater the exposure to the systemic risk, hence it is again intuitive that the premium increases with $\sigma_1$ as well. 

$\theta$ has a monotone decreasing relationship with $\tilde{c}$ provided that $\alpha$ remains constant.  But since $\alpha$ is a function of $\theta$, in general there is no monotone relationship between $\theta$ and $\tilde{c}$.   

    

\subsection{The effect of $\sigma_1$} 

\begin{figure}[h]
 \centering
  \includegraphics[height=4.5in,width=6.5in]{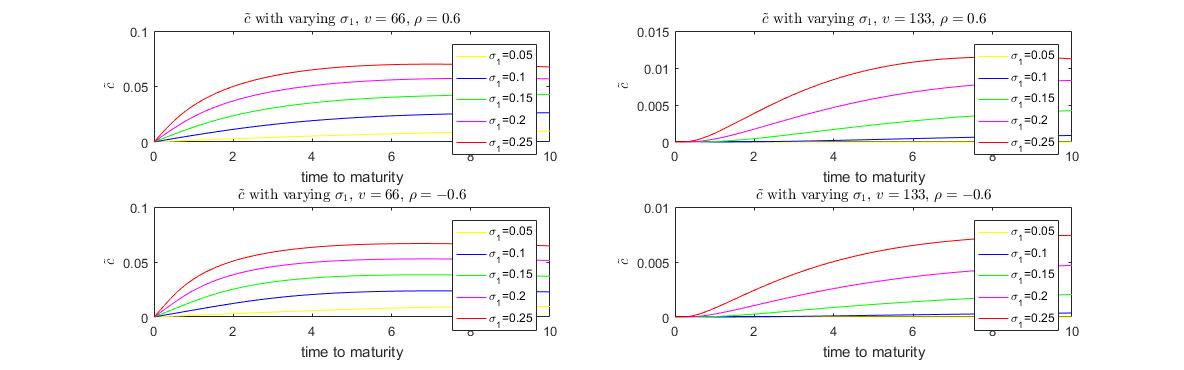}
  \caption{$\tilde{c}$ with varying $\sigma_1$, $\mu_1$ = 0.02,  $\theta=0.4$}
\label{sigmaall}  
\end{figure}    
We explicitly calculate  $\frac{\partial b}{\partial \sigma_1}$ as
\[ \frac{\partial b}{\partial \sigma_1}= v(T-t)e^{\alpha(T-t)}N(d_1) \frac{\partial{\alpha}}{\partial{\sigma_1}}- D\frac{e^{-\frac{d_2^2}{2}}}{\sqrt{2\pi}} \sqrt{T-t}\]
Note if $\rho\mu_2\geq 0$, then  $\frac{\partial{\alpha}}{\partial{\sigma_1}}<0$, hence, in this case $b$ is monotone decreasing as $\sigma_1$ increases.  If $\rho\mu_2<0$, it is not clear if the effect of $\sigma_1$ on $b$ would be monotone.  

In Figure \eqref{sigmaall} we plotted $\tilde{c}$ as $\sigma_1$ varies from $0.05$ to $0.25$ for each of the cases  $\rho=0.6$, $v=66$ and $\rho=0.6$, $v=133$, and $\rho=-0.6$, $v=66$ and $\rho=-0.6$, $v=133$, while setting $\mu_1$ = 0.02, $\theta=0.4$ .   We see that in all the cases $\tilde{c}$ increases as $\sigma_1$ increases, suggesting that there is a monotone increasing relationship between $\tilde{c}$ and $\sigma_1$.  However, we have not confirmed this theoretically.

\subsection{The effect of $\rho$}

The correlation parameter $\rho$ affects both the price of the replicating portfolio and the replication error.  Examining the formula for $b$, we see that $\rho$ appears only in the term $\alpha$.  Since $\alpha$ is monotone in $\rho$ and $b$ is monotone in $\alpha$, $b$ is monotone in $\rho$.  In particular, $b$ is monotone decreasing (resp. increasing) as $\rho$ increases  if $\mu_2>0$ (resp. $\mu_2<0)$.  If $\mu_2=0$ then $b$ is constant in $\rho$.

The replication error $c$ is a product of $1-\rho^2$ and another term which also depends on $\rho$, but only through $\alpha$. Hence if $\alpha$ remains constant, $\tilde{c}$ will decrease as $\rho^2$ increases, which is intuitive.  But, surprisingly, there is no monotone relationship between $\tilde{c}$ and $\rho^2$ in general.   In Figure \eqref{rho66133}, we plotted $\tilde{c}$ as $\rho$ varies from $-0.6$ to $0.6$ setting $\mu_1$ = 0.02, $\sigma_1=0.15$, $\theta=0.4$ for each of the cases $v=66$ and $v=132$.   For example, when $v=132$, $\tilde{c}$ does not decrease when $\rho$ is changed from $0$ to $0.3$.  We also observe that the replication error is smaller for negative values of $\rho$ when $\rho^2$ is the same.
\begin{figure}[h]
 \centering
  \includegraphics[height=3in,width=7in]{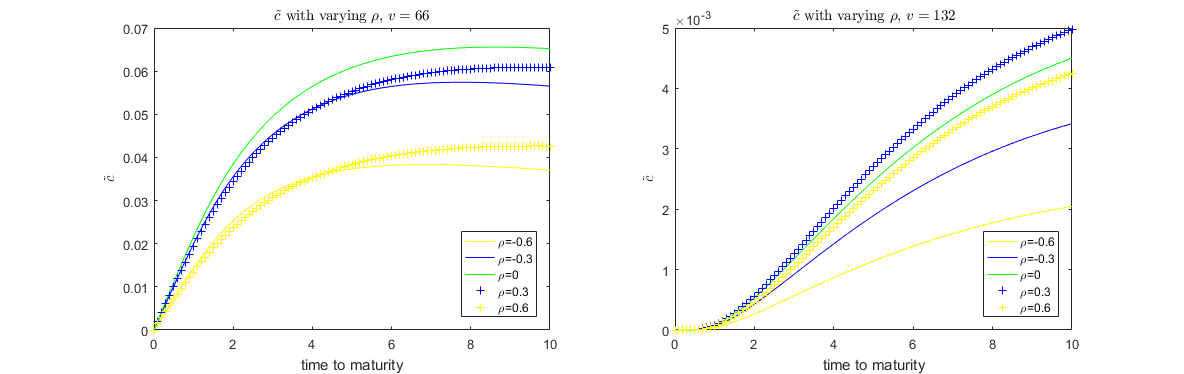}
  \caption{$\tilde{c}$ with varying $\rho$, $\mu_1$ = 0.02, $\sigma_1=0.15$, $\theta=0.4$}
\label{rho66133}  
\end{figure}

\subsection{The yield spreads and the effect of $\kappa$}

\begin{figure}[p]
 \centering
  \includegraphics[height=8in,width=7in]{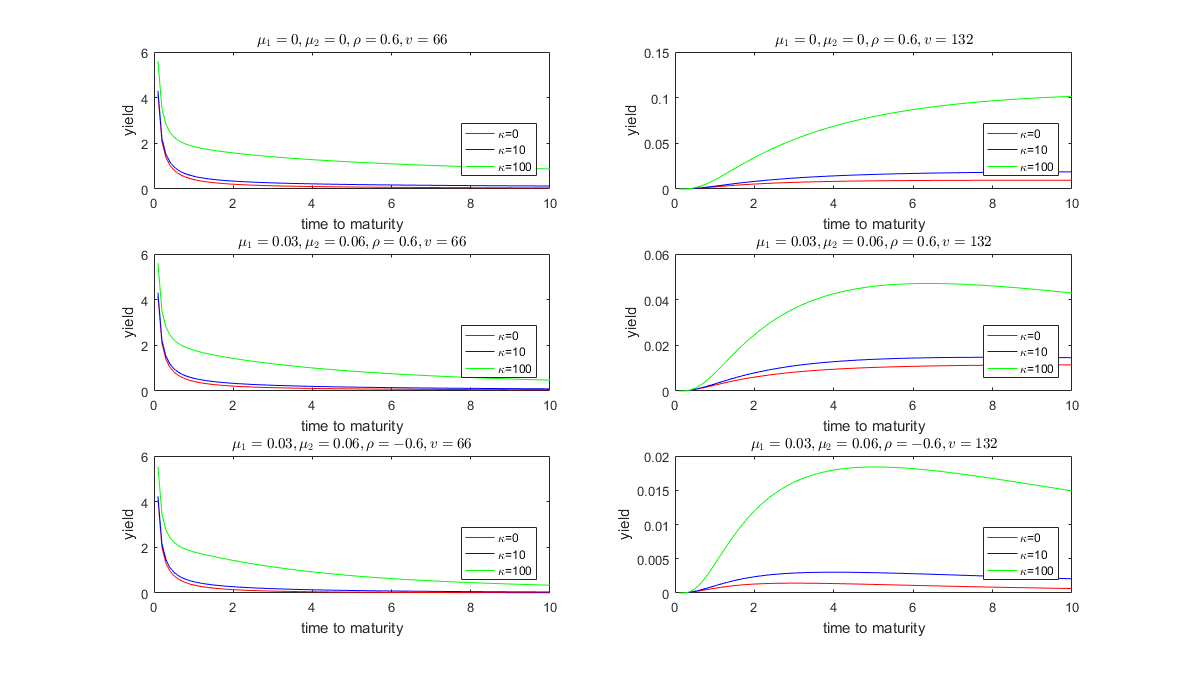}
  \caption{yield with varying $\kappa$ for various configurations}
\label{yieldall}  
\end{figure}

The yield at date $t$ of a Merton style bond with maturity $T$ is defined as the function 
\[y(t,T)= \frac{\log(D)-\log(B_t)}{T-t}\] 
Since we assume that $B_t$ is already discounted, $y(t,T)$ is also the yield spread. Recall, given the parameters, the price of the bond at time $t$ is a function of $T-t$ and the firm value $V_t$ only, hence we can represent $y(t,T)=y(V_t,T-t)$.  In this section, we will plot the function $y(v,T)$ as a function of $T$ (time to maturity), for $v=66$ and $v=132$.  $y(v,T)$ can be interpreted as the time $0$ yield of a Merton style bond with maturity $T$, when the firm value at time $0$ is equal to $v$.  $v=66$ and $v=132$ represent two distinct scenarios where the firm's value is respectively less and greater than the face value of the debt ($D=100$).        

In the first set of plots in Figure \eqref{yieldall} we set $\sigma_1=0.15$, $\rho=0.6$, and $\mu_1=\mu_2=0$.  Note that this gives $\alpha=\mu_1-\rho\theta \sigma_1=0$.  Since the price of the replicating portfolio is the Merton model's price when $\alpha=0$, we obtain the same yield curves as the Merton model when $\kappa=0$.  The yield curve is monotone decreasing when the firm's value is $66$ and monotone increasing when $v=133$  As we vary $\kappa$ the yield spreads increase, however the shape of the curves remain the same.

In the second set of plots in Figure \eqref{yieldall} we set $\mu_1=0.03$ and $\theta=0.4$, while keeping $\sigma_1=0.15$ and $\rho=0.6$ as before. Note that in this configuration $\alpha=-0.006$, which causes an increase in the yield spread at the baseline case, $\kappa=0$.  This is because the price of the replicating portfolio increases with $\alpha$.  The parameter $\kappa$ increases the yield spreads, and when $v=132$, and we see a hump shape pattern emerging when $\kappa$ increases to $100$.  

In the third set of plots in Figure \eqref{yieldall}, we set $\mu_1=0.03$ and $\theta=0.4$, $\sigma_1=0.15$ and $\rho=-0.6$.  In this configuration note that $\alpha=0.009$,  hence the yields decrease in the baseline case, $\kappa=0$.  The yields increase with $\kappa$ as before, and the hump shape of the yield spread is more noticeable when $\kappa=100$ and $v=132$.

\section{Conclusion}  
To summarize our results, we proposed a pricing formula for a contingent claim on the assets of a firm where the assets are not traded.  The price has two components, one is the price of the mean variance portfolio constructed by a correlated asset that is traded in the market.  The second component is a discount factor which is determined by a parameter $\kappa$ and the error of the replication.  In the specific application to the pricing of the Merton style bond, we are able to show that the pricing formula is arbitrage free without any restrictions on the value of $\kappa$.  This gives a wide range of prices for the Merton style bond, and the value of $\kappa$ can be interpreted as the representative investor’s risk aversion towards the non-hedgeable risk.

We are able to describe meaningful relationships between most parameters and the price of the mean variance hedging portfolio via both theoretical and numerical means.   We are not able to qualitatively describe the relationship between the parameters and the replication error by theoretical means.  However, numerical simulations suggest that $\sigma_1$ has a monotone relationship with the replication error.  The numerical simulations indicated non-monotone relationships between the replication error and the other parameters.

\section{Appendix}
As the boundary condition for $a$ is independent of $s$ and $v$, it is natural to guess that the solution to the PDE (\ref{eq_a}) is independent of $s$ and $v$. It is easy to verify that the function
\begin{equation}\label{sol_a}
a(t) = e^{\left(-\frac{\mu_2}{\sigma_2}\right)^2 t}
\end{equation}
is a solution to the PDE (\ref{eq_a}). Moreover, (\ref{sol_a}) is a unique solution according to Theorem 8.1 in \cite{Lad}. Therefore, we obtain that the unique solution to PDE (\ref{eq_a}) is
\begin{equation}
\begin{cases}
a(t) = e^{-\left(\frac{\mu_2}{\sigma_2}\right)^2 {T-t}},\\
a(T) = 1.
\end{cases}
\end{equation}

To treat the PDE (\ref{eq_b}), we use the change of variables:
\begin{align}
\tau&=T-t, \qquad \tau \in [0, T];\label{change1}\\
u&=\ln v, \qquad u \in (-\infty, \infty);\label{change2}\\
w&=\ln s, \qquad w \in (-\infty, \infty).\label{change3}
\end{align}  
Then the PDE (\ref{eq_b}) becomes
 \begin{align}\label{sub_b}
\frac{\partial b}{\partial \tau} =&  \frac{1}{2}\sigma_1^2\frac{\partial^2 b}{\partial u^2}+\frac{1}{2}\sigma_2^2\frac{\partial^2 b}{\partial w^2}
+\sigma_1\sigma_2\rho\frac{\partial^2 b}{\partial u\partial w}-\frac{1}{2}\sigma_1^2\frac{\partial b}{\partial u}-\frac{1}{2}\sigma_2^2\frac{\partial b}{\partial w}\nonumber\\
&-\left(\frac{\sigma_1\rho \mu_2}{\sigma_2}-\mu_1\right)\frac{\partial b}{\partial u}+\frac{\sigma_1^2}{a}(\rho^2-1)\frac{\partial a}{\partial u}\frac{\partial b}{\partial u},
\end{align}
with initial condition at $\tau$ = 0:
\begin{equation*}
b(0,w,u) = \min(e^u,D)
\end{equation*}

Note that the initial condition as well as the coefficients in equation (\ref{sub_b}) are independent of $s$. We thus may assume that the solution to (\ref{sub_b}) is independent of $s$. In order to put the equation into a canonical form, we can make the following transformation
\begin{equation}\label{tilde_b}
b(\tau, u) = \tilde{b}(\tau, u)e^{\eta u+\beta \tau},
\end{equation}
where $\eta$ and $\beta$ are chosen so that the lower order terms disappear. One can check that by choosing
$$
\eta=\frac{1}{2}+\frac{1}{\sigma_1^2}\left[\frac{\sigma_1\rho \mu_2}{\sigma_2}-\mu_1\right],
$$
$$
\beta=-\left[\frac{\sigma_1^2}{2}+\left(\frac{\sigma_1\rho \mu_2}{\sigma_2}-\mu_1\right)\right]\eta+\frac{\sigma_1^2}{2}\eta^2= -\frac{\sigma_1^2}{2}\eta^2,
$$
the PDE (\ref{sub_b}) reduces to
\begin{equation*}
\frac{\partial \tilde{b}}{\partial \tau} = \frac{1}{2}\sigma_1^2\frac{\partial^2 \tilde{b}}{\partial u^2},
\end{equation*}
with initial condition
\begin{equation*}
\tilde{b}(0,u) = e^{-\eta u}\min(e^u,D).
\end{equation*}

It is known \cite{Evans} that the solution to the above Cauchy problem is unique and is given by
\begin{equation*}
\tilde{b}(\tau,u)=\frac{1}{\sigma_1\sqrt{2\pi\tau}}\int\limits_{-\infty}^{+\infty}\tilde{b}(0,y)e^{-\frac{(u-y)^2}{2\sigma_1^2\tau}}dy.
\end{equation*}
Substituting $\tilde{b}(0,y) = e^{-\eta y}\min(e^y,D)$, we obtain
\begin{eqnarray*}
\tilde{b}(\tau,u)& = &\frac{1}{\sigma_1\sqrt{2\pi\tau}}\int\limits_{-\infty}^{\ln D}e^{(1-\eta)y}e^{-\frac{(u-y)^2}{2\sigma_1^2\tau}}dy+\frac{1}{\sigma_1\sqrt{2\pi\tau}}\int\limits_{\ln D}^{+\infty}De^{-\eta y} e^{-\frac{(u-y)^2}{2\sigma_1^2\tau}}dy \label{sol_tilde_b}\\
& = & e^{u(1-\eta)+\frac{\sigma_1^2\tau(1-\eta)^2}{2}}N\left(\frac{\ln D-u-\sigma_1^2\tau(1-\eta)}{\sigma_1\sqrt{\tau}}\right)-D e^{-u\eta+\frac{\sigma_1^2\tau\eta^2}{2}}N\left(\frac{\ln D-u-\sigma_1^2\tau\eta}{\sigma_1\sqrt{\tau}}\right)\nonumber\\
&& +\: D e^{-u\eta+\frac{\sigma_1^2\tau\eta^2}{2}}\label{Sol_tilde_b}.
\end{eqnarray*}
Substituting (\ref{change1}), (\ref{change2}) and (\ref{tilde_b}), we obtain the unique solution to the PDE (\ref{eq_b}):
\begin{equation}\label{sol_b}
\begin{cases}
b(t, v) =  ve^{\left(\mu_1-\frac{\mu_2\rho\sigma_1}{\sigma_2}\right)(T-t)}N(d_1)+D\left(1-N(d_2)\right),\\
b(T, v) = \min(v, D),
\end{cases}
\end{equation}
where $N$ is the standard normal distribution function and
\begin{eqnarray*}
&& d_1 = \frac{\ln \frac{D}{v}-\sigma_1^2(1-\eta)(T-t)}{\sigma_1\sqrt{T-t}},\\
&& d_2 = d_1+\sigma_1\sqrt{T-t},
\end{eqnarray*}
where we have $\eta=\frac{1}{2}+\frac{1}{\sigma_1^2}\left[\frac{\sigma_1\rho \mu_2}{\sigma_2}-\mu_1\right]$.  Note that $1-\eta=\frac{1}{2}+\frac{\alpha}{\sigma_1^2}$ where $\alpha=\mu_1-\frac{\mu_2\rho\sigma_1}{\sigma_2}$.  
In regarding to the PDE for $c$, by the change of variables (\ref{change1}), (\ref{change2}) and (\ref{change3}), the equation (\ref{eq_c}) becomes
\begin{align}\label{change_c}
\frac{\partial c}{\partial \tau} = & \mu_{1}\frac{\partial c}{\partial u}+\mu_{2}\frac{\partial c}{\partial w}+\frac{1}{2}\sigma_{1}^2(\frac{\partial^2c}{\partial u^2}-\frac{\partial c}{\partial u})+\frac{1}{2}\sigma_{2}^2(\frac{\partial^2c}{\partial w}-\frac{\partial c}{\partial w})\nonumber\\
&- a\sigma_{1}^2(\rho^2-1)(\frac{\partial b}{\partial u})^2,
\end{align}
with initial condition at $\tau = 0$:
\begin{equation*}
c(0, w, u) = 0.
\end{equation*}

Note that the initial condition as well as the coefficients in equation (\ref{change_c}) are independent of $s$. We hence may assume that the solution to (\ref{change_c}) is independent of $w$. Similar to what we have done for $b$, we make the following transformation
\begin{equation}\label{tilde_c}
c(\tau, u) = \tilde{c}(\tau, u)e^{\alpha^1 u+\beta^1 \tau},
\end{equation}
with
$$
\alpha^1=\frac{1}{2}-\frac{1}{\sigma_1^2}\mu_1,
$$
$$
\beta^1=\left(\mu_1-\frac{1}{2}\sigma_1^2\right)\alpha^1+\frac{\sigma_1^2}{2}(\alpha^1)^2 = -\frac{\sigma_1^2}{2}(\alpha^1)^2.
$$

Therefore, the PDE (\ref{change_c}) reduces to
\begin{equation}\label{red_c}
\frac{\partial \tilde{c}}{\partial \tau} = \frac{1}{2}\sigma_1^2\frac{\partial^2 \tilde{c}}{\partial u^2}+e^{-\alpha^1u-\beta^1\tau}a\sigma_1^2(1-\rho^2)(\frac{\partial b}{\partial u})^2,
\end{equation}
with initial condition
\begin{equation*}
\tilde{c}(0, u) = 0.
\end{equation*}

In order to solve the above PDE, we need the value of $\frac{\partial b}{\partial u}$. Firstly, we differentiate the expression (\ref{sol_tilde_b}) to obtain
\begin{equation*}
\frac{\partial \tilde{b}}{\partial u} = \frac{1}{\sigma_1\sqrt{2\pi\tau}}\left(\int\limits_{-\infty}^{\ln D}\frac{-2(u-y)}{2\sigma_1^2\tau}e^{(1-\eta)y}e^{-\frac{(u-y)^2}{2\sigma_1^2\tau}}dy+\int\limits_{\ln D}^{+\infty}\frac{-2(u-y)}{2\sigma_1^2\tau}De^{-\eta y} e^{-\frac{(u-y)^2}{2\sigma_1^2\tau}}dy\right).
\end{equation*}
Integrating by parts we have
\begin{equation}\label{der_tilde_b}
\frac{\partial \tilde{b}}{\partial u}
=\frac{1}{\sigma_1\sqrt{2\pi\tau}}\left(\int\limits_{-\infty}^{\ln D}(1-\eta)e^{(1-\alpha)y}e^{-\frac{(u-y)^2}{2\sigma_1^2\tau}}dy
-D\eta\int\limits_{\ln D}^{+\infty}e^{-\eta y} e^{-\frac{(u-y)^2}{2\sigma_1^2\tau}}dy\right).
\end{equation}
Using the definition (\ref{tilde_b}), it is easy to check that
$$
\frac{\partial b}{\partial u}=\left(\frac{\partial \tilde{b}}{\partial u}+\eta \tilde{b}\right)e^{\eta u+\beta \tau}.
$$
Thus,
\begin{equation}\label{der_b}
\begin{split}
\frac{\partial b}{\partial u}=& \frac{e^{\eta u+\beta \tau}}{\sigma_1\sqrt{2\pi\tau}}
\int\limits_{-\infty}^{\ln D}e^{(1-\eta)y}e^{-\frac{(u-y)^2}{2\sigma_1^2\tau}}dy\\
=& e^{\frac{\sigma_1^2\tau(1-\eta)^2}{2}+u+\beta\tau}N\left(\frac{\ln D -\sigma_1^2\tau(1-\eta)-u}{\sigma_1\sqrt{\tau}}\right)\\
=& e^{u+\left(\mu_1-\frac{\mu_2\rho\sigma_1}{\sigma_2}\right)\tau}N(d_1).
\end{split}
\end{equation}
By substituting (\ref{der_b}) into (\ref{red_c}), we are able to obtain the unique solution to the PDE (\ref{red_c}) in the form
\begin{equation}
\begin{cases}
\tilde{c}(\tau, u) = \int_0^\tau \frac{1}{\sigma_1\sqrt{2\pi(\tau-s)}}\int_{-\infty}^{\infty}e^{-\frac{y^2}{2\sigma_1^2(\tau-s)}}f(s, u-y)dy ds,\\
\tilde{c}(0, u) = 0,
\end{cases}
\end{equation}
where we have the substitutions
\begin{equation*}
\begin{cases}
u = \ln v,\\
f(\tau, u) = e^{-\alpha^1u-\beta^1\tau}a\sigma_1^2(1-\rho^2)(\frac{\partial b}{\partial u})^2,\\
c(\tau, u) = \tilde{c}(\tau, u)e^{\alpha^1 u+\beta^1 \tau},\\
\alpha^1=\frac{1}{2}-\frac{1}{\sigma_1^2}\mu_1,\\
\beta^1=-\frac{\sigma_1^2}{2}(\alpha^1)^2.
\end{cases}
\end{equation*}

\bibliography{Financebib}
\bibliographystyle{alpha}
\end{document}